\newtheorem{theorem}{Theorem}
\newtheorem{definition}{Definition}
\newtheorem{lemma}{Lemma}
\definecolor{light-gray}{gray}{0.95}
\newcommand{\cas}{CAS\xspace}
\newcommand{\scas}{\texttt{scas}\xspace}
\begin{document}

\begin{titlepage}

\title{
  \raisebox{30mm}[0mm][0mm]{\Large
    Technical Report no. 2009-10
  }
  \raisebox{5mm}[0mm][0mm]{
    \textbf{
      \begin{tabular}{c}
      Supporting Lock-Free Composition\\
      of Concurrent Data Objects
      \end{tabular}
    }
  }
}
\author{\raisebox{-15mm}[0mm][0mm]{\textit{\Large Daniel Cederman\footnote{Supported by Microsoft Research through its European PhD Scholarship Programme.}}} \and \raisebox{-15mm}[0mm][0mm]{\textit{\Large Philippas Tsigas\footnote{Partially supported by the Swedish Research Council (VR).}}}}
\date{
  \vspace{\stretch{1}}
  \enlargethispage{1.1\baselineskip}
  \includegraphics{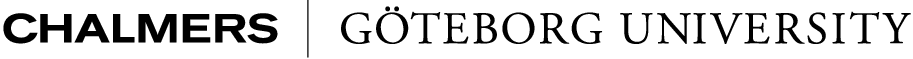} \\
  \vspace{5mm}
  \includegraphics{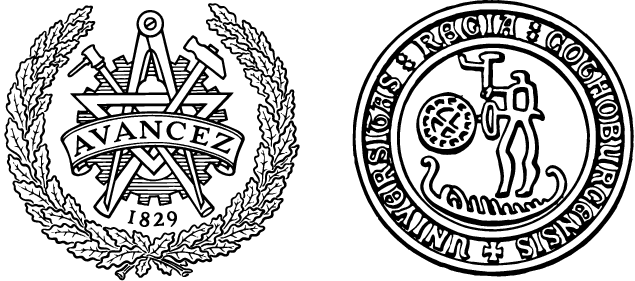} \\
  \vspace{12mm}
  Department of Computer Science and Engineering \\
  Chalmers University of Technology
    and G\"{o}teborg University \\
  SE-412 96 G\"{o}teborg, Sweden \\
  \vspace{12mm}
  G\"{o}teborg, 2009
}
\maketitle \thispagestyle{empty}
\end{titlepage}

\newpage
\thispagestyle{empty}
\mbox{}
\vspace{\stretch{1}}

\noindent
{\large
  \begin{tabular}{l}
    \includegraphics{graphs/logos/ChalmGUmarke} \\[3ex]
    Technical Report in Computer Science and Engineering at \\
    Chalmers University of Technology and G\"oteborg University
    \vspace{3ex} \\
    Technical Report no. 2009-10 \\
    ISSN: 1650-3023
    \vspace{3ex} \\
    Department of Computer Science and Engineering \\
    Chalmers University of Technology
      and G\"{o}teborg University \\
    SE-412 96 G\"{o}teborg, Sweden
    \vspace{3ex} \\
    G\"{o}teborg, Sweden, 2009
  \end{tabular}
}

\newpage

\thispagestyle{empty}

\begin{abstract}
Lock-free data objects offer several advantages over their blocking counterparts, such as being
immune to deadlocks and convoying and, more importantly, being highly concurrent. But they
share a common disadvantage in that the operations they provide are difficult to compose
into larger atomic operations while still guaranteeing lock-freedom. We present a lock-free
methodology for composing highly concurrent linearizable objects together by unifying their linearization points.
This makes it possible to relatively easily introduce atomic lock-free move operations to a wide range
of concurrent objects. Experimental evaluation has shown that the operations originally supported by
the data objects keep their performance behavior under our methodology.
\end{abstract}
\bigskip

\centerline{{\bf Keywords}: Lock-free, Composition, Data Structure, Move Operation}

\newpage
\thispagestyle{empty}
\mbox{}
\newpage
\setcounter{page}{1}

\section{Introduction}
A concurrent data object is lock-free if it guarantees that at least one operation, in the set of concurrent operations that it supports,
finishes after a finite number of its own steps have been executed by processes accessing the concurrent data object.
Lock-free data objects offer several advantages over their blocking counterparts, such as
being immune to deadlocks, priority inversion, and convoying, and have been shown to work
well in practice. They have been included in Intel's Threading Building Blocks Framework and the Java concurrency package,
and will be included in the forthcoming parallel extensions to the Microsoft .NET Framework \cite{tbb,java,msconc}.
However, the lack of a general, efficient, lock-free method for composing them makes it difficult for
the programmer to perform multiple operations together atomically in a complex software setting.
Algorithmic designs of lock-free data objects only consider the basic operations that define the data object.
To glue together multiple objects, one usually needs to solve a task that is many times more challenging than
the design of the data objects themselves, as lock-free data objects are often too complicated to be trivially altered.
Composing blocking data objects also puts the programmer in a difficult situation, as it requires knowledge of the way locks are handled internally,
in order to avoid deadlocks.

Techniques such as Software Transactional Memories (STMs) provide good composability \cite{compstm}, but have
problems with high overhead and have poor support for dealing with non-transactional code \cite{stmtoy,whystm}. They
require, with few exceptions, that the data objects are rewritten to be handled completely inside the STM,
which lowers performance compared to pure non-blocking data objects, and, moreover, provides no support to non-transactional code.

\subsection{Composing}
With the term composing we refer to the task of binding together multiple operations in such a way that they can be performed as one, without any intermediate state being
visible to other processes. In the literature the term is also used for nesting, making one data object part of another, which is an interesting problem, but
outside the scope of this paper. To give an example of the type of composing we consider, one can imagine a scenario where one wants to
compose together a hash-map and a linked list to provide a move operation for the user, in addition to all the previous operations that define
the two data objects.
Both data objects will most likely have implementations of insert and remove operations,
but an operation that can move elements from one instance of the object to another is not typically considered in the design of a single object.
Even if one of the data objects would have such an operation, it is unlikely that it would be compatible with the other data object, as it is of a different type.
However, if the insert and remove operations could be composed, the resulting move operation would possess those characteristics.

Composing lock-free concurrent data objects, in the context that we consider in this paper, has been an open problem in the area of lock-free data objects.
Customized compositions of specific concurrent data objects include the composition of lock-free flat-sets by Gidenstam et al. \cite{nbmalloc}
that constitute the foundation of a lock-free memory allocator.

Using blocking locks to compose the operations would reduce the concurrency and remove the lock-freedom guarantees of the remove and insert operations,
as it is not possible to combine lock-free operations with lock-based ones in a lock-free manner. This is because both the remove and the insert operations would have
to acquire a lock before executing, in order to ensure that they are not executed concurrently with the composed move operation. This would cause the operations to
be executed sequentially and lose their lock-free behavior, which guarantees that a process never needs to wait for
another process that is not making progress, which is what happens when a process needs to wait for a lock to be released.
Simply put, a generic way to compose concurrent objects, without foiling the possible lock-freedom guarantees of the objects, has to be lock-free itself.

\subsection{Contributions}

\begin{figure*}[t]
\center
\includegraphics[width=0.9\textwidth]{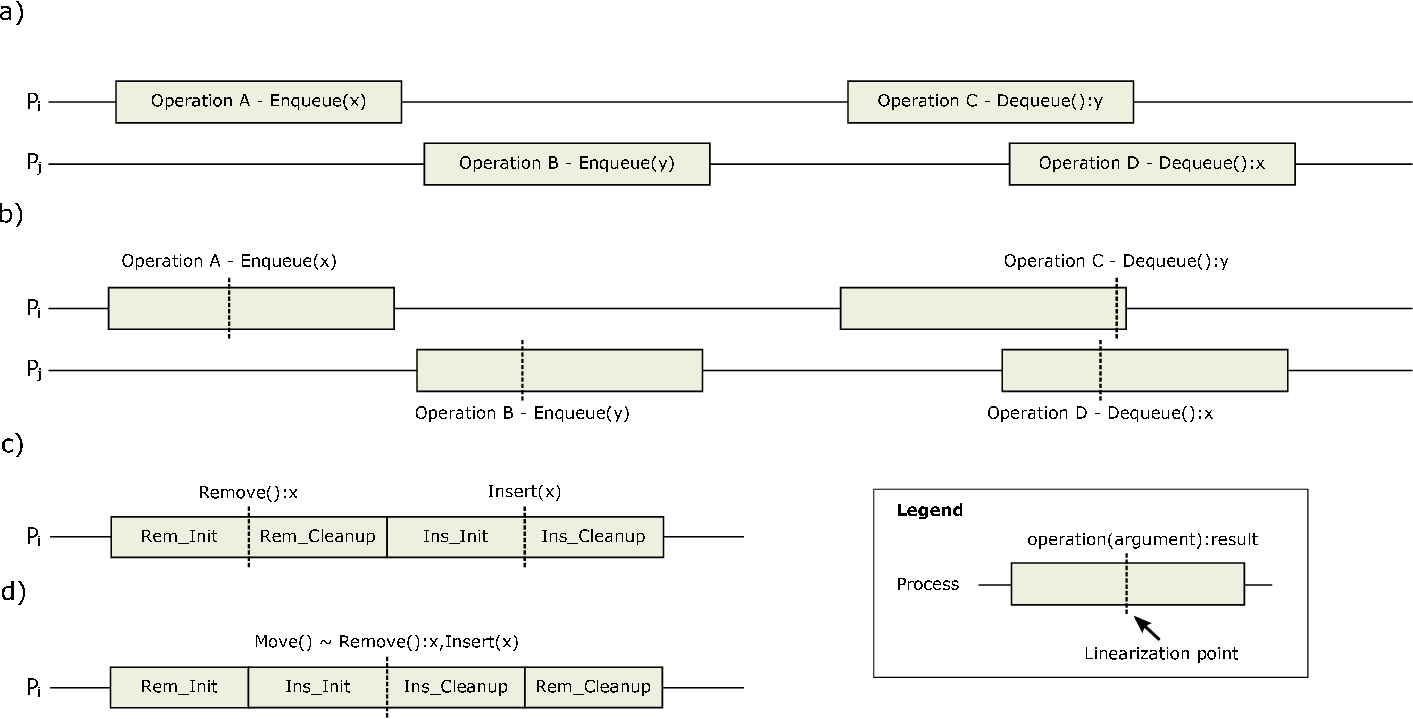}
\caption{(a) Concurrent history where two processes, $p_i$ and $p_j$, each enqueue and dequeue an element from a FIFO queue. (b) The same history with
the respective linearization points marked with dotted lines. (c) History of one process moving an element by means of a remove and insert operation. Notice
the time between the two linearization points. (d) The same history using the methodology presented in this paper. The linearization points have now been
unified.}
\label{fig:line}
\end{figure*}

The main contribution of this paper is to provide a methodology to introduce atomic move operations
that can move elements between objects of different types to a large class of already existing concurrent
objects without having to make significant changes to them. It manages this while preserving the lock-free
guarantees of the object and without introducing significant performance penalties to the previously supported
operations. Move operations are an important part of the core functionality needed when composing any
kind of containers, as they provide the possibility to shift items between objects.

We first present a set of properties that can be used to identify suitable concurrent objects and then we
describe the mostly mechanical changes needed for the move operation to function together with the object.

Our methodology is based on the idea of decomposing and then arranging lock-free operations appropriately
so that their linearization points can be combined to form new composed lock-free operations.
The linearization point of a concurrent operation is the point in time where the operation can be
said to have taken effect. Most concurrent data objects that are not read- or write-only support an insert and
a remove operation, or a set of equivalent operations that can be used to modify its content. These two types
of operations can be composed together using the method presented in this paper to make them appear to take
effect simultaneously. By doing this we provide a lock-free atomic operation that can move elements between objects
of different types. To the best of our knowledge this is the first time that such a general scheme has been proposed.

As a proof of concept we show how to apply our method on two commonly used
concurrent data objects, the lock-free queue by Michael and Scott \cite{msqueue} and the lock-free stack
by Treiber \cite{lfstack}. Experimental results on an Intel multiprocessor
system show that the methodology presented in the paper, applied to the previously mentioned lock-free
implementations, offers significantly better performance and scalability than a composition method based on locking. The proposed method
does this in addition to its qualitative advantages regarding progress guarantees that lock-freedom offers.
Moreover, the experimental evaluation has shown that the operations
originally supported by the data objects keep their performance behavior
while used as part of our methodology.

\section{The Model}
The model considered is the standard shared memory model, where a set of memory locations can be read from
and written to, by a number of processes that progress asynchronously.
Concurrent data objects are composed of a subset of these memory locations together with a set of operations that can
use read and write instructions, as well as other atomic instructions, such as compare-and-swap (\cas).
We require all concurrent data objects to be linearizable to assure correctness.

Linearizability is a commonly used correctness criterion introduced by Herlihy and Wing \cite{linear}. Each operation
on a concurrent object consists of an invocation and a response. A sequence of such
operations makes up a history. Operations in a concurrent history can be placed in any order if they occur concurrently,
but an operation that finishes before another one is invoked must appear before the latter.
If the operations in any actual concurrent history can be reordered in this way,
so that the history is equivalent to a correct sequential history, then the concurrent object is linearizable.

In Figure \ref{fig:line}a we see four operations being performed on a FIFO queue. Operation A has given
a response before operation B was invoked, which means that operation A must appear before operation
B in a sequential history for the object to be linearizable. Operations C and D, however, occur concurrently
and even though it seems like the operations violate the FIFO order there is no way to tell where operations
C and D actually take effect, which means that D occurring before C is a valid order and the object
is linearizable.

A way of looking at linearizability is to think that an operation takes effect at a specific point
in time, the linearization point. All operations can then be ordered according to the linearization
point to form a sequential history. In Figure \ref{fig:line}b we see the linearization points clearly
marked with dotted lines, which makes it easy to see that the equivalent sequential history is [A,B,D,C], which is a correct sequential history.

In Figure \ref{fig:line}c we see an invocation of a remove and insert operation that moves an element
from one object to another. It can easily be seen that there is a moment between the first linearization point,
in the remove operation, and the second linearization point, in the insert operation, where
the element being moved is not present in either object. This state could be seen by a concurrent process, which
is not always desirable. In this case it would be useful to be able to compose the two operations
so that they are performed atomically without other processes being able to see an intermediate state.

\section{The Methodology}
We present a method that can be used to unify the linearization points of a remove and an insert
operation for any two concurrent objects, given that they fulfill certain requirements. We call a concurrent
object that fulfills these requirements a \emph{move-candidate} object.

\subsection{Characterization}

\begin{definition}
\label{def:move-cand}
A concurrent object $S$ is a \emph{move-candidate} if it fulfills the following requirements:
\begin{enumerate}
\item It implements linearizable operations for insertion and removal of a single element.  \label{req:twoops}
\item Insert and remove operations invoked on different instances of the object can succeed simultaneously.\label{req:conc}
\item The linearization points of the successful insert and remove operations can be associated to successful \cas operations, on
a pointer, by the process that invoked it. Such an associated successful \cas can never lead to an unsuccessful insert or remove operation. \label{req:caslin}
\item The element to be removed is accessible before the linearization point.
\label{req:nogap}
\end{enumerate}
\end{definition}

To implement a move operation the equivalent of a remove and insert operation needs to be available or be implemented.
A generic insert or remove operation would be very difficult to write, as it must be tailored specifically to
the concurrent object, which motivates the first requirement.

Requirement 2 is needed since a move operation tries to perform the removal and insertion of an
element at the same time. If a successful removal invalidates an insertion, or the other way around,
then the move operation can never succeed. This could happen when the insert and remove operations share
locks between them or when they are using memory management schemes such as hazard pointers \cite{hazard}.
With shared locks there is the risk of deadlocks, when the process could be waiting for itself to release the lock in the remove
operation, before it can acquire the same lock in the insert operation. Hazard pointers, which are used to mark memory
that cannot yet be reused, could be overwritten if the same pointers are used in both the insert and remove operations.

Requirement 3 requires that the linearization points can be associated to successful \cas operations.
The linearization points are usually provided together with the algorithmic description of each object.
Implementations that use the LL/SC\footnote{LL (Load-Link) and SC (Store-Conditional) are used together. LL reads a value from a memory
location and SC can then only write a new value at the same location if the memory location has not been written to since the last LL.}
pair  for synchronization can be translated to ones that use CAS by
using the construction by Doherty et al. that implements the LL/SC functionality from \cas \cite{castollsc}.
The requirement also states that the \cas operation should be on a variable holding
a pointer. This is not a strict requirement; the reason for it is that the DCAS operation used in our methodology often needs to
be implemented in software due to lack of hardware support for such an operation. By only working with pointers it makes the implementation
much simpler. The last part, which requires the linearization point of an operation to be part of the process that invoked it, prevents concurrent
data objects from using \emph{helping} schemes.

Requirement 4 is necessary since the insert operation needs to be invoked with the removed element as an argument. The
element is usually available before the linearization point, but there are data objects where the element is never returned
by the remove operation, or is accessed after the linearization point for efficiency reasons.

\subsection{The Algorithm}
The main part of the algorithm is the actual move operation, which is described in the following section. Our move operation makes heavy
use of a DCAS operation that is described in detail in Section \ref{sec:dcas}.

\subsubsection{The Move Operation}

The main idea behind the move operation is based on the observation that the linearization points of many
concurrent objects' operations is a \cas and that by combining these {\cas}s and performing them simultaneously,
it would be possible to compose operations.
A move operation does not need an expensive general multi-word \cas, so an efficient two word \cas customized for this particular operation is good enough.
We would like to simplify the utilization of this idea
as much as possible, and for this reason we worked towards three goals when we designed the move
operation:

\begin{itemize}
\item Changes required to adapt the code that implements the operations of the concurrent data object should be minimal and possible to perform mechanically.
\item It should minimize the performance impact on the normal operations of the concurrent data objects.
\item The move operation should be lock-free if the insert and remove operations are lock-free.
\end{itemize}

With these goals in mind we decided that the easiest and most generic way would be to reuse the remove and insert operations that
are already supported by the object.
By definition a move-candidate operation has a linearization point that consists of a successful \cas. We
call the part of the operation prior to this linearization point the init-phase and the part after it the cleanup-phase.
The move can then be seen as taking place in five steps:

\begin{description}
\item[1st step.] The init-phase of the remove is performed. If the removal fails, due for example to the element not existing, the move is aborted.
Otherwise the arguments to the \cas at the potential linearization point
 are stored. By requirement 4 of the definition of a move-candidate we now have access to the element that is to be moved.
\item[2nd step.] The init-phase of the insert is performed using the element received in the previous step.
 If the insertion fails, due for example to the object being full, the move is aborted.
 Otherwise the arguments to the \cas at the potential linearization point are stored.
\item[3rd step.] The {\cas}s that define the linearization points, one for each of the two operations, are performed together atomically using a DCAS operation with the stored
\cas arguments. Step two is redone if DCAS failed due to a conflict in the insert operation. Steps one and two are redone if the conflict was in
the remove operation.
\item[4th step.] The cleanup-phase for the insert operation is performed.
\item[5th step.] The cleanup-phase for the remove operation is performed.
\end{description}

The above five steps of the algorithm are graphically described in Figure \ref{fig:line}d.

To be able to divide the insert and remove operations into the init- and cleanup-phases without resorting
to code duplication, it is required to replace all possible linearization point {\cas}s with a call to the \scas operation.
The task of the \scas operation is to restore control to the move operation and store the arguments intended for the \cas that was replaced.
The \scas operation is described in Algorithm \ref{alg:moveop} and comes in two forms, one to be called by the insert operations and one to
be called by the remove operations. They can be distinguished by the fact that the \scas for removal requires the element
to be moved as an argument. If the \scas operation is invoked as part of a normal insert or remove, it reverts back
to the functionality of a normal \cas. This should minimize the impact on the normal operations.

If the DCAS operation used is a software implementation that uses helping, it might be required to use hazard pointers to
disallow reclaiming of the memory used by it. In those cases the hazard pointers can be given as an argument to the \scas operation
and they will be brought to the DCAS operation. The DCAS operation provided in this paper uses helping and takes advantage of the
support for hazard pointers.

If the DCAS in step 3 should fail, this could be due to one of two reasons. First, it could fail because the \cas for the insert failed. In this
case the init-phase for the insert needs to be redone before the DCAS can be invoked again. Second, it could fail because
the \cas for the remove failed. Now we need to redo the init-phase for the remove, which means that the insert
operation needs to be aborted. For concurrent objects such as linked lists and stacks there might not be a way
for the insert to abort, so code to handle this scenario must be inserted.
If the insertion operation can fail for reasons other than conflicts with another operation, there is also a need for the
remove operation to be able to handle the possibility of aborting.

Depending on whether one uses a hardware implementation of a DCAS or a software implementation, it might also be required
to alter all accesses to memory words that could take part in DCAS, so that they access the word via a special read-operation
designed for the DCAS.

A concurrent object that is a move-candidate (Definition \ref{def:move-cand}) and has implemented all the above changes is called a \emph{move-ready}
concurrent object. This is described formally in the following definition.

\begin{definition}
A concurrent object is \emph{move-ready} if it is a \emph{move-candidate} and has
implemented the following changes:
\begin{enumerate}
\item The \cas at each linearization point in the insert and remove operations have been changed to \scas.
\item The insert (and remove) operation(s) can abort if the \scas returns \texttt{ABORT}.
\item All memory locations that could be part of a \scas are accessed via the \emph{read} operation.
\end{enumerate}
\end{definition}
The changes required are mostly mechanical once the object has been found to adhere to the move-ready definition.
This object can then be used by our move operation to move items between different instances
of any concurrent move-ready objects.

Theorem \ref{theorem:moveworks} in Section \ref{sec:proof} states that the move operation is linearizable and lock-free if used together
with two move-ready lock-free concurrent data objects.

\subsubsection{DCAS}
\label{sec:dcas}
The DCAS operation performs a \cas on two distinct words atomically (See Algorithm \ref{alg:dcassem} for its semantics).
It is unfortunately not commonly available in hardware, some say for good reasons \cite{dcassilver}, so for our experiments
it had to be implemented in software. There are several different multi-word compare-and-swap methods available in the literature
\cite{israelidisjoint,andersonwf,concobjects,reactivemw,universal,trans,stm,harrismwcas} and ours uses the same basic idea as in
the solution by Harris et al.

Lock-freedom is achieved by using a two-phase locking scheme
with helping\footnote{Lock-freedom does not exclude the use of locks, in contrast to its definition-name, if the locks can be revoked.}.
First an attempt is made to change both the words involved, using a normal \cas, to point to a descriptor that holds all
information required for another process to help the DCAS complete. See lines \ref{code:dcaslin1} and \ref{code:dcasword2set} in Algorithm \ref{alg:dcascode}.
If any of the {\cas}s fail, the DCAS is unsuccessful as both words need to match their old value.
In this case, if one of the {\cas}s succeeded, its corresponding word must be reverted back to its old value. When a word holds the descriptor
it cannot be changed by any other non-helping process, so
if both {\cas}s are successful, the DCAS as a whole is successful. The two words can now be changed one at a time to hold their respective new values.
See lines \ref{code:dcasnewval1} and \ref{code:dcasnewval2}.

If another process wants to access a word that is involved in a DCAS, it first needs to help the DCAS operation finish. The process knows that a word is used in a DCAS
if it is pointing to a descriptor. This is checked at line \ref{code:isdesccheck} in the read operation. In our experiments we have marked the descriptor pointer by setting
its least significant bit to one. This is a method introduced by Harris et al. \cite{bitmark} and it is possible to use since we assume that the word
will contain a pointer and that pointers will be aligned to the word size of the system.
Using the information in
the descriptor it tries to perform the same steps as the initiator, but marks the pointer
to the descriptor it tries to swap in with its thread id. This is done to avoid the ABA-problem, which can occur since \cas cannot distinguish a word
that has been changed from A to B and then back to A again, from a word whose value has remained A. Unless taken care of in this manner, the ABA-problem could
cause the DCAS to succeed multiple times, one for each helping process.

Our DCAS differs from the one by Harris et al. in that i) it has support for reporting which, if any, of the operations has failed, ii) it does not need to allocate an
 \texttt{RDCSSDescriptor} as it only changes two words, iii) it has support for hazard pointers, and iv) it requires two fewer {\cas}s in the uncontended case.
These are, however, minor differences and for our methodology to function it is not required to use our specific implementation.
Performance gains and practicality reasons account for the introduction of the new DCAS.
The DCAS is linearizable and lock-free according to Theorem \ref{theorem:dcasworks}.

\def\ListLetter#1{\def\listletter{#1}}
\renewcommand*\thelstnumber{{\tiny\listletter\the\value{lstnumber}}}
\lstset{escapeinside={(*}{*)}}

\lstset{emph={DCASDesc,DCAS},emphstyle=\bfseries, emph={[2]UNDECIDED,FIRSTFAILED,SUCCESS,SECONDFAILED,ABORT}, emphstyle={[2]\texttt}}

\ListLetter{NOTSET}

\begin{algorithm}[t]
\begin{lstlisting}[mathescape=true]
struct DCASDesc
  word old$_1$, old$_2$, new$_1$, new$_2$
  word $\ast$ptr$_1$, $\ast$ptr$_2$
  [word $\ast$hp$_1$, $\ast$hp$_2$]
  word res
\end{lstlisting}

\begin{lstlisting}[mathescape=true,numbers=none]
dres DCAS(desc)
  if(desc.$\ast$ptr$_1$ $\neq$ desc.old$_1$)
    return FIRSTFAILED
  if(desc.$\ast$ptr$_2$ $\neq$ desc.old$_2$)
    return SECONDFAILED
  desc.$\ast$ptr$_1$ $\gets$ desc.new$_1$
  desc.$\ast$ptr$_2$ $\gets$ desc.new$_2$
  return SUCCESS
\end{lstlisting}
\caption{Semantics of the DCAS operation.}
\label{alg:dcassem}
\end{algorithm}

\lstset{emph={DCAS,abort,scas,remove,insert,move,enqueue,dequeue,push,pop,read},emphstyle=\bfseries}

\begin{algorithm}[th]

\ListLetter{R}
\begin{lstlisting}[mathescape=true]%,numbers=left,numbersep=-8pt]

bool remove([key],$\ast$item)
  ...
  while(unsuccessful)
    ...
    result $\gets$ scas(ptr, old, new, element, [hp])
    // Only needed when insert can fail
    [if(result=ABORT)]
      [abort]
      [return false]
    ...
  ...
\end{lstlisting}

\ListLetter{I}
\begin{lstlisting}[mathescape=true]%,numbers=left,numbersep=-8pt]

bool insert([key],item)
  ...
  while(unsuccessful)
    ...
    result $\gets$ scas(ptr, old, new,[hp])
    if(result=ABORT)
      abort
      return false
    ...
  ...
\end{lstlisting}
\caption{Basic operations.}
\end{algorithm}

\begin{algorithm}[t]

\ListLetter{M}
\begin{lstlisting}[mathescape=true]
(*{\bf thread local variables}*)
  desc, ltarget, lskey, ltkey, insfailed
\end{lstlisting}

\begin{lstlisting}[mathescape=true,numbers=left,numbersep=-13pt, name=move]
   bool move(source, target, [skey, tkey])
    desc $\gets$ new DCASDesc
    desc.res $\gets$ UNDECIDED
    [lskey $\gets$ skey, ltkey $\gets$ tkey]
    ltarget $\gets$ target
    result $\gets$ source.remove([lskey], tmp)
    desc $\gets$ 0
    return result
\end{lstlisting}

\begin{lstlisting}[mathescape=true,numbers=left,numbersep=-13pt, name=move]
   fbool scas(ptr, old, new, element, [hp])
    if(desc $\neq$ 0)
      desc.ptr$_1$ $\gets$ ptr
      desc.old$_1$ $\gets$ old
      desc.new$_1$ $\gets$ new
      [desc.hp$_1$ $\gets$ hp]
      insfailed $\gets$ true
      result $\gets$ ltarget.insert([ltkey], element)
      if(insfailed)
        return ABORT
      return result
    else
      return cas(ptr,old,new)
\end{lstlisting}

\begin{lstlisting}[mathescape=true,numbers=left,numbersep=-13pt, name=move]
   fbool scas(ptr, old, new, [hp])
    if(desc $\neq$ 0)
      desc.ptr$_2$ $\gets$ ptr
      desc.old$_2$ $\gets$ old
      desc.new$_2$ $\gets$ new
      [desc.hp$_2$ $\gets$ hp]
      result $\gets$ DCAS(desc, true) (*\label{code:maindcascall}*)
      if(result != SUCCESS)
        desc $\gets$ new DCASDesc(desc)
        desc.res $\gets$ UNDECIDED
      insfailed $\gets$ false (*\label{code:insfailedset}*)
      if(result = FIRSTFAILED)
        return ABORT
      if(result = SECONDFAILED)
        return false
      return true
    else
      return cas(ptr,old,new)
\end{lstlisting}
\caption{Move operation.}
\label{alg:moveop}
\end{algorithm}

\begin{algorithm}[t]
\ListLetter{D}
\begin{lstlisting}[mathescape=true,numbers=left,numbersep=-12.5pt, name = dcas]
   dres DCAS(desc, initiator)
    [if($\lnot$initiator)]
        [hp$_1$ $\gets$ desc.hp$_1$, hp$_2$ $\gets$ desc.hp$_2$] (*\label{code:dcassethp}*)
    if(desc.res = SUCCESS $\lor$ SECONDFAILED) (*\label{code:dcashelpselftest}*)
      if(desc is marked)
        cas(desc.ptr$_2$, desc.old$_2$, desc) (*\label{code:dcashelpself1}*)
      else
        cas(desc.ptr$_1$, desc.old$_1$, desc) (*\label{code:dcashelpself2}*)
      return desc.res (*\label{code:dcasreturn1}*)
    if(initiator$\land \lnot$cas(desc.ptr$_1$,desc,desc.old$_1$)) (*\label{code:dcaslin1}*)
      return FIRSTFAILED (*\label{code:dcasreturn2}*)

    mdesc $\gets$ mark(unmark(desc),threadID)
    p2set $\gets$ cas(desc.ptr$_2$, mdesc, desc.old$_2$) (*\label{code:dcasword2set}*)
    if($\lnot$p2set)
      if($\ast$desc.ptr$_2$.ptr $\neq$ desc)
        cas(desc.res, SECONDFAILED, UNDECIDED) (*\label{code:dcaslin2}*) (*\label{code:dcasreschange1}*)
        if(desc.res = SUCCESS)
          return desc.res (*\label{code:dcasreturn3}*)
        if(desc.res = SECONDFAILED)
          cas(desc.ptr$_1$, desc.old$_1$, desc)
          return desc.res (*\label{code:dcasreturn4}*)

    cas(desc.res, mdesc, UNDECIDED) (*\label{code:dcaslin3}*) (*\label{code:dcasreschange2}*)
    if(desc.res = SECONDFAILED) (*\label{code:dcasfinalrescheck}*)
      if(p2set) cas(desc.ptr$_2$, desc.old$_2$, mdesc)
      return desc.res (*\label{code:dcasreturn5}*)
    cas(desc.ptr$_1$, desc.new$_1$, desc) (*\label{code:dcasnewval1}*)
    cas(desc.ptr$_2$, desc.new$_2$, desc.res) (*\label{code:dcasnewval2}*)
    desc.res $\gets$ SUCCESS (*\label{code:dcasreschange3}*)
    return desc.res (*\label{code:dcasreturn6}*)
\end{lstlisting}

\begin{lstlisting}[mathescape=true,numbers=left,numbersep=-12.5pt, name = dcas]
   word read($\ast$ptr)
    result $\gets \ast$ptr
    while(result is DCASDesc) (*\label{code:isdesccheck}*)
      hp$_d$ $\gets$ result
      if(hp$_d$ = $\ast$ptr)
        DCAS(result,false)
      result $\gets \ast$ptr
    return result
\end{lstlisting}
\caption{Double word compare-and-swap.}
\label{alg:dcascode}
\end{algorithm}

\section{Proof}
\label{sec:proof}

The first part of the proof section proves that the DCAS operation is linearizable and lock-free and
the second proves that the move operation is linearizable and lock-free.

\subsection{DCAS}

\begin{lemma}The DCAS descriptor's \texttt{res} variable can only change from \texttt{UNDECIDED}
to \texttt{SECONDFAILED} or from \texttt{UNDECIDED} to a marked
descriptor and consequently to \texttt{SUCCESS}.\label{lemma:dcasreschanges}\end{lemma}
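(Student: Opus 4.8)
The plan is to track every write to \texttt{desc.res} in the \texttt{DCAS} code of Algorithm \ref{alg:dcascode} and argue that each one respects the claimed transition diagram. First I would identify the complete set of statements that can modify \texttt{desc.res}: the initializing assignment \texttt{desc.res}~$\gets$~\texttt{UNDECIDED} performed by the creator of the descriptor (in \texttt{move} / \texttt{scas} of Algorithm \ref{alg:moveop}), and then inside \texttt{DCAS} the three conditional writes on lines \ref{code:dcasreschange1} (\texttt{cas(desc.res, SECONDFAILED, UNDECIDED)} --- note the argument order, so this writes \texttt{SECONDFAILED} when the current value is \texttt{UNDECIDED}), \ref{code:dcasreschange2} (\texttt{cas(desc.res, mdesc, UNDECIDED)}, which writes the marked descriptor \texttt{mdesc} when the current value is \texttt{UNDECIDED}), and \ref{code:dcasreschange3} (\texttt{desc.res}~$\gets$~\texttt{SUCCESS}). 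I would note that lines \ref{code:dcashelpself1}, \ref{code:dcashelpself2}, \ref{code:dcasnewval1}, \ref{code:dcasnewval2} touch \texttt{desc.ptr}$_i$, not \texttt{desc.res}, so they are irrelevant here.

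The argument then proceeds by case analysis on the value \texttt{desc.res} holds. From \texttt{UNDECIDED}, lines \ref{code:dcasreschange1} and \ref{code:dcasreschange2} are the only writes that can fire --- a \texttt{CAS} with expected value \texttt{UNDECIDED} --- and they move it to \texttt{SECONDFAILED} or to a marked descriptor \texttt{mdesc} respectively, matching the two allowed out-edges. Line \ref{code:dcasreschange3}'s plain write to \texttt{SUCCESS} must be shown to be reachable only when \texttt{desc.res} already equals the marked descriptor: I would trace the control flow, observing that line \ref{code:dcasreschange3} is reached only after line \ref{code:dcasreschange2} executed and the subsequent check on line \ref{code:dcasfinalrescheck} found \texttt{desc.res}~$\neq$~\texttt{SECONDFAILED}; combined with the claim that no write ever produces \texttt{UNDECIDED} once the value has left it (line \ref{code:dcasreschange1} writes \texttt{SECONDFAILED}, not \texttt{UNDECIDED}, because of the argument order), this forces the value at line \ref{code:dcasreschange3} to be exactly the marked descriptor that line \ref{code:dcasreschange2} wrote, so the \texttt{mdesc}~$\to$~\texttt{SUCCESS} edge is the only transition. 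Finally, from \texttt{SECONDFAILED} and from \texttt{SUCCESS} I would check that no \texttt{CAS} has the right expected value to fire (the expected values appearing in lines \ref{code:dcasreschange1}, \ref{code:dcasreschange2} are \texttt{UNDECIDED}, and line \ref{code:dcasreschange3} is unreachable in those states by the control-flow argument above), so both are terminal --- giving exactly the chain \texttt{UNDECIDED}~$\to$~\texttt{SECONDFAILED} and \texttt{UNDECIDED}~$\to$~\texttt{mdesc}~$\to$~\texttt{SUCCESS}.

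The subtle point, and the one I expect to be the main obstacle, is ruling out a spurious return to \texttt{UNDECIDED}. The literal text of line \ref{code:dcasreschange1} is \texttt{cas(desc.res, SECONDFAILED, UNDECIDED)}, which under one reading (old, new) would set \texttt{res} back to \texttt{UNDECIDED} from \texttt{SECONDFAILED}; under the other reading (new, old, as the surrounding code seems to intend given line \ref{code:dcasfinalrescheck}) it sets \texttt{res} to \texttt{SECONDFAILED} from \texttt{UNDECIDED}. I would pin down the intended \texttt{CAS} argument convention used throughout the paper (consistency with line \ref{code:dcaslin1}, \texttt{cas(desc.ptr$_1$,desc,desc.old$_1$)}, and with line \ref{code:dcasreschange2} where \texttt{mdesc} is clearly the new value, strongly suggests the convention \texttt{cas(addr, new, old)}), and under that convention the lemma goes through cleanly: \texttt{res} only ever leaves \texttt{UNDECIDED} in one step and never comes back, and once at \texttt{SECONDFAILED} or \texttt{SUCCESS} it is stable. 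A secondary concern worth a sentence is that multiple helping processes (and the initiator) may all execute these lines concurrently on the same descriptor; since all the state-changing writes except line \ref{code:dcasreschange3} are \texttt{CAS}es, and line \ref{code:dcasreschange3}'s target value \texttt{SUCCESS} is idempotent and, by the control-flow argument, only written when \texttt{res} is already the unique \texttt{mdesc}, concurrency does not create any transition outside the claimed diagram.
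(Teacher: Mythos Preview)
Your proposal is correct and follows essentially the same approach as the paper: enumerate the three writes to \texttt{desc.res} (lines \ref{code:dcasreschange1}, \ref{code:dcasreschange2}, \ref{code:dcasreschange3}), observe that the first two are \cas operations that only succeed from \texttt{UNDECIDED}, and use the control-flow guard at line \ref{code:dcasfinalrescheck} to show that the plain write at line \ref{code:dcasreschange3} is only reached when \texttt{res} already holds a marked descriptor (or \texttt{SUCCESS}). Your additional care about the \cas argument convention and about concurrent helpers is sound and more explicit than the paper's terse version, but the underlying argument is the same.
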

\begin{proof}
The \texttt{res} variable is set at lines \ref{code:dcasreschange1}, \ref{code:dcasreschange2}, and \ref{code:dcasreschange3}.
On lines \ref{code:dcasreschange1} and \ref{code:dcasreschange2} the change is made using \cas, which assures that the variable
can only change from \texttt{UNDECIDED} to \texttt{SECONDFAILED} or to a marked descriptor. Line \ref{code:dcasreschange3} writes
\texttt{SUCCESS} directly to \texttt{res}, but it can only be reached if \texttt{res} differs from \texttt{SECONDFAILED} at line \ref{code:dcasfinalrescheck}, which
means that it must hold a marked descriptor as set on line \ref{code:dcasreschange2} or already hold \texttt{SUCCESS}.
\end{proof}

\begin{lemma}The initiating and all helping processes will receive the same result value.\label{lemma:dcassameres}\end{lemma}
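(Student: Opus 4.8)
The plan is to argue that the final result value of a DCAS invocation is entirely determined by the contents of the shared descriptor's \texttt{res} field, and that every process — initiator or helper — reads its return value from that same field (or from a value that provably equals it). Concretely, I would first observe that every \texttt{return} statement inside \texttt{DCAS(desc, initiator)} falls into one of two categories: either it returns \texttt{desc.res} directly (the returns at lines \ref{code:dcasreturn1}, \ref{code:dcasreturn3}, \ref{code:dcasreturn4}, \ref{code:dcasreturn5}, \ref{code:dcasreturn6}), or it returns \texttt{FIRSTFAILED} at line \ref{code:dcasreturn2} without ever touching \texttt{res}. I then need to show that the second case can only be taken by the initiator and, when it is taken, no process — helping or otherwise — can have returned, or will ever return, anything other than \texttt{FIRSTFAILED}.

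For that, I would combine two facts. First, by Lemma \ref{lemma:dcasreschanges} the \texttt{res} field is monotone: it starts \texttt{UNDECIDED} and moves only to \texttt{SECONDFAILED}, or to a marked descriptor and thence to \texttt{SUCCESS}; in particular it never takes the value \texttt{FIRSTFAILED}. Second, line \ref{code:dcasreturn2} is reached only by a process with \texttt{initiator} true whose \cas at line \ref{code:dcaslin1} failed to install the descriptor into \texttt{desc.ptr}$_1$; I would argue that in this situation \texttt{desc.ptr}$_1$ never held \texttt{desc}, so no helping process can ever observe the descriptor (a helper is only spawned from the \texttt{read} operation after seeing the descriptor in a word), hence no helper is ever created for this invocation and the question of agreement with helpers is vacuous. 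Conversely, if line \ref{code:dcaslin1} succeeds, every subsequent return of the initiator — and of any helper — goes through a \texttt{return desc.res}, so all that remains is to check that all of the direct \texttt{res}-returns are consistent with one another. This follows because \texttt{res} is monotone along the chain \texttt{UNDECIDED} $\to$ \{\texttt{SECONDFAILED}, marked desc $\to$ \texttt{SUCCESS}\}: once a process reads a terminal value (\texttt{SECONDFAILED} or \texttt{SUCCESS}) and returns it, the value of \texttt{res} can no longer change to a different terminal value, and the guard at line \ref{code:dcashelpselftest} together with the checks after lines \ref{code:dcasreschange1} and \ref{code:dcaslin3} ensure a process never returns the non-terminal intermediate state (a bare marked descriptor) — it either helps complete the transition to \texttt{SUCCESS} or re-reads.

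I would then handle the helper's entry point precisely: a helper enters \texttt{DCAS} via line \ref{code:isdesccheck} of \texttt{read} with \texttt{initiator} false, so it skips line \ref{code:dcaslin1}; its only returns are the \texttt{desc.res} returns, and the same monotonicity argument applies verbatim. The one subtlety — and I expect this to be the main obstacle — is ruling out a "stale terminal read": showing that a process cannot read \texttt{res} $=$ \texttt{SECONDFAILED} at one point and another process read \texttt{res} $=$ \texttt{SUCCESS}, which requires knowing that \texttt{res} is never \cas-ed \emph{out} of \texttt{SECONDFAILED}. That is exactly the content of Lemma \ref{lemma:dcasreschanges} (the \cas at line \ref{code:dcasreschange1} only fires when \texttt{res} is \texttt{SECONDFAILED}, restoring \texttt{UNDECIDED} — wait, this needs care), so the careful part of the write-up is to trace which \cas instructions can fire from the \texttt{SECONDFAILED} state and confirm they do not escape it in a way that contradicts a return already performed; once that bookkeeping is pinned down, the lemma follows.
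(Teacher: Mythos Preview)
Your approach is essentially the paper's: enumerate the return points, use Lemma~\ref{lemma:dcasreschanges} to show \texttt{res} is monotone so any terminal value (\texttt{SUCCESS} or \texttt{SECONDFAILED}) read at a return is stable, and treat \texttt{FIRSTFAILED} separately by observing that the failed \cas\ at line~\ref{code:dcaslin1} means the descriptor was never announced, so no helper exists. The one point you flagged (``wait, this needs care'') is just a misreading of the paper's \cas\ argument order: throughout, \texttt{cas(loc,\,new,\,old)} writes \emph{new} when \emph{loc} equals \emph{old}, so line~\ref{code:dcasreschange1} moves \texttt{res} \emph{from} \texttt{UNDECIDED} \emph{to} \texttt{SECONDFAILED}, not the reverse --- exactly as Lemma~\ref{lemma:dcasreschanges} asserts, and your worry about escaping \texttt{SECONDFAILED} evaporates.
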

\begin{proof}
DCAS returns the result value at lines \ref{code:dcasreturn1}, \ref{code:dcasreturn2}, \ref{code:dcasreturn3}, \ref{code:dcasreturn4}, \ref{code:dcasreturn5},
and \ref{code:dcasreturn6}. Lines \ref{code:dcasreturn4} and \ref{code:dcasreturn5} are only executed if \texttt{res} is equal to \texttt{SECONDFAILED} and
we know by Lemma \ref{lemma:dcasreschanges} that the result value cannot change after that. Lines \ref{code:dcasreturn6} and \ref{code:dcasreturn3} can
only be executed when \texttt{res} is \texttt{SUCCESS} and by the same Lemma the value can not change. Line \ref{code:dcasreturn1} only returns when the
result value is either \texttt{SUCCESS} or \texttt{SECONDFAILED} and as stated before these value cannot change. Line \ref{code:dcasreturn2} returns
\texttt{FIRSTFAILED} when the initiator process fails to announce the DCAS, which means that no other process will help the operation to finish.
\end{proof}

\begin{lemma}Iff the result value of the DCAS is \texttt{SUCCESS}, then $\ast$ptr$_1$ has changed value from old$_1$ to the descriptor to new$_1$ and $\ast$ptr$_2$
has changed value from old$_2$ to a marked descriptor to new$_2$ once.\label{lemma:dcassucciscorrect}\end{lemma}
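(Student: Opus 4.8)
The plan is to prove both directions of the "iff" by tracing the control flow of \texttt{DCAS} in Algorithm \ref{alg:dcascode}, using Lemma \ref{lemma:dcasreschanges} to pin down exactly which sequence of writes to \texttt{res}, \texttt{*ptr$_1$}, and \texttt{*ptr$_2$} must have occurred.

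First I would handle the forward direction: assume the result value is \texttt{SUCCESS}. By Lemma \ref{lemma:dcasreschanges}, \texttt{res} reaches \texttt{SUCCESS} only by first passing through a marked descriptor; and \texttt{res} is set to the marked descriptor only at line \ref{code:dcasreschange2} via a \cas whose expected value is \texttt{UNDECIDED}. Hence some process succeeded with that \cas at line \ref{code:dcasreschange2}, which is reachable only after the \cas at line \ref{code:dcasword2set} (or the equivalent work in a helper) set \texttt{*ptr$_2$} to \texttt{mdesc = mark(unmark(desc),threadID)} — so at that moment \texttt{*ptr$_2$} held \texttt{old$_2$} and was changed to a marked descriptor. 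Similarly, reaching line \ref{code:dcasword2set} at all requires that the initiator's \cas at line \ref{code:dcaslin1} succeeded (it is the only place \texttt{*ptr$_1$} is changed to \texttt{desc}, and \texttt{FIRSTFAILED} is returned otherwise), so \texttt{*ptr$_1$} held \texttt{old$_1$} and became \texttt{desc}. Once \texttt{res} holds the marked descriptor, the failure check at line \ref{code:dcasfinalrescheck} is not taken, so lines \ref{code:dcasnewval1} and \ref{code:dcasnewval2} execute \texttt{cas(desc.ptr$_1$,desc.new$_1$,desc)} and \texttt{cas(desc.ptr$_2$,desc.new$_2$,mdesc)} (the latter via \texttt{desc.res}, which now equals \texttt{mdesc}), completing the described value history. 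For the reverse direction, I would argue the contrapositive: if the result is \texttt{FIRSTFAILED}, line \ref{code:dcaslin1} failed so \texttt{*ptr$_1$} never became \texttt{desc}; if it is \texttt{SECONDFAILED}, then by Lemma \ref{lemma:dcasreschanges} \texttt{res} never held a marked descriptor, so lines \ref{code:dcasnewval1}--\ref{code:dcasnewval2} are never reached and any \cas that set \texttt{*ptr$_2$} to a marked descriptor is reverted at line \ref{code:dcasreschange1}'s branch (the \texttt{cas(desc.ptr$_2$,desc.old$_2$,mdesc)} under \texttt{p2set}) or at line \ref{code:dcasreturn4}'s cleanup, so neither word ends at its \texttt{new} value through this operation.

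The second half of the claim — that each change happens \emph{exactly once} — is where the ABA marking matters, and I expect this to be the main obstacle. The point is that line \ref{code:dcaslin1} is guarded by \texttt{initiator}, so only the unique initiating process attempts the \texttt{*ptr$_1$} $\to$ \texttt{desc} transition; helpers never retry it. For \texttt{*ptr$_2$}, multiple helpers may attempt line \ref{code:dcasword2set}, but each uses \texttt{mark(unmark(desc),threadID)} with its \emph{own} thread id, so at most one such \cas can succeed against the same \texttt{old$_2$}: once \texttt{*ptr$_2$} holds some marked descriptor it is no longer \texttt{old$_2$}, and a second success would require the word to return to \texttt{old$_2$}, which (since only this DCAS operation manipulates these words and it only writes \texttt{desc}, \texttt{mdesc}, \texttt{new$_1$}, \texttt{new$_2$} values, never restoring \texttt{old$_2$} except on a failure path that also drives \texttt{res} to \texttt{SECONDFAILED}, contradicting \texttt{SUCCESS}) cannot occur. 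The write-backs at lines \ref{code:dcasnewval1}--\ref{code:dcasnewval2} are \cas operations expecting \texttt{desc.new$_1$}/\texttt{desc.new$_2$}, so even if several processes execute them, only the first takes effect. I would therefore conclude that, conditioned on the result being \texttt{SUCCESS}, \texttt{*ptr$_1$} goes \texttt{old$_1$} $\to$ \texttt{desc} $\to$ \texttt{new$_1$} and \texttt{*ptr$_2$} goes \texttt{old$_2$} $\to$ marked descriptor $\to$ \texttt{new$_2$}, each step exactly once, which is the statement of the lemma.
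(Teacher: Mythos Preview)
Your forward-direction trace and the reverse-direction contrapositive are essentially the paper's argument. The gap is in the ``exactly once'' part for $\ast\texttt{ptr}_2$. You argue that after line~\ref{code:dcasword2set} succeeds, $\ast\texttt{ptr}_2$ cannot return to \texttt{old}$_2$ because ``only this DCAS operation manipulates these words.'' That premise is false: once line~\ref{code:dcasnewval2} installs \texttt{new}$_2$, processes \emph{outside} the DCAS are free to modify $\ast\texttt{ptr}_2$, and in particular can drive it back to \texttt{old}$_2$. A late helper can then succeed at line~\ref{code:dcasword2set} again with its own \texttt{mdesc}. The paper's proof explicitly acknowledges this ABA scenario and says $\ast\texttt{ptr}_2$ \emph{can} hold a marked descriptor multiple times.

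The mechanism that actually enforces ``once'' for the $\ast\texttt{ptr}_2 \to \texttt{new}_2$ step is the one you skipped: the \cas at line~\ref{code:dcaslin3} records exactly one winning \texttt{mdesc} in \texttt{desc.res}, and line~\ref{code:dcasnewval2} is \texttt{cas(desc.ptr$_2$, desc.new$_2$, desc.res)} (note the argument order is \texttt{cas(ptr, new, old)} throughout), so only the helper whose marked descriptor equals \texttt{desc.res} can complete that swap; any other helper that re-installed its own \texttt{mdesc} will fail at line~\ref{code:dcasnewval2} and must revert. Your write-up also has the expected/old values at lines~\ref{code:dcasnewval1}--\ref{code:dcasnewval2} reversed, which obscures exactly this point. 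Fixing the uniqueness argument for $\ast\texttt{ptr}_2$ to go through \texttt{desc.res} rather than through a no-ABA assumption is the missing idea.
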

\begin{proof}
On line \ref{code:dcaslin1} $\ast$ptr$_1$ is set to the descriptor by the initiating process as otherwise the result value would be \texttt{FIRSTFAIL}. On line
\ref{code:dcasword2set}, $\ast$ptr$_2$ is set to a marked descriptor by any of the processes. By contradiction, if all processes failed to change the
value of $\ast$ptr$_2$ on line \ref{code:dcasword2set}, the result value would be set to \texttt{SECONDFAILED} on line \ref{code:dcaslin2}. On line \ref{code:dcaslin3}
the \texttt{res} variable is set to point to a marked descriptor. This change is a step on the path to the \texttt{SUCCESS} result value and thus must be taken. On line
\ref{code:dcasnewval1} $\ast$ptr$_1$ is changed to new$_1$ by one process. It can only succeed once as the descriptor is only written once by the initiating process.
This is in contrast to $\ast$ptr$_2$ which can hold a marked descriptor multiple times due to the ABA-problem at line \ref{code:dcasword2set}. When $\ast$ptr$_2$
is changed to new$_2$ it could be changed back to old$_2$ by a process outside of the DCAS. The \cas at line \ref{code:dcasword2set} has no way of detecting this.
This is the reason why we are using a marked descriptor that is stored in the \texttt{res} variable using \cas, as this will allow only one process to change the value of
$\ast$ptr$_2$ to new$_2$ on line \ref{code:dcasnewval2}. A process that manages to store its marked descriptor to $\ast$ptr$_2$, but was not the first to set the
\texttt{res} variable, will have to change it back to its old value.
\end{proof}

\begin{lemma}Iff the result value of the DCAS is \texttt{FIRSTFAILED} or \texttt{SECONDFAILED}, then $\ast$ptr$_1$ was not changed to new$_1$ in the DCAS and $\ast$ptr$_2$
was not changed to new$_2$ in the DCAS due to either $\ast$ptr$_1\neq$old$_1$ or $\ast$ptr$_2\neq$old$_2$.\label{lemma:dcasfailiscorrect}\end{lemma}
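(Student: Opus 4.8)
The plan is to establish the two directions of the ``iff'' separately, tracing each execution path through the DCAS code to the relevant return statement. For the forward direction, I would start from the observation (Lemma~\ref{lemma:dcassameres}) that all processes agree on the result value, so it suffices to show that whenever the result is \texttt{FIRSTFAILED} or \texttt{SECONDFAILED}, neither pointer was written to its new value along any path. The case \texttt{FIRSTFAILED} is the easier one: by Lemma~\ref{lemma:dcassameres}, the value \texttt{FIRSTFAILED} can only originate at line~\ref{code:dcasreturn2}, which is reached exactly when the initiator's \cas at line~\ref{code:dcaslin1} fails, i.e. when $\ast\text{ptr}_1 \neq \text{old}_1$; since that \cas is the only place $\ast\text{ptr}_1$ is set to the descriptor, no helper can ever proceed past line~\ref{code:dcaslin1} either, so $\ast\text{ptr}_1$ is never set to $\text{new}_1$ (lines~\ref{code:dcasnewval1} is unreachable), and $\ast\text{ptr}_2$ is likewise never touched.

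For the \texttt{SECONDFAILED} case I would argue that \texttt{res} becomes \texttt{SECONDFAILED} only via the \cas at line~\ref{code:dcaslin2}, which (together with the enclosing test that $\ast\text{desc.ptr}_2 \neq \text{desc}$) witnesses $\ast\text{ptr}_2 \neq \text{old}_2$ at that moment; here I should invoke Lemma~\ref{lemma:dcasreschanges} to conclude that once \texttt{res} is \texttt{SECONDFAILED} it stays there, so the success path (lines~\ref{code:dcasnewval1}--\ref{code:dcasreschange3}) is never taken by any process, and the guard at line~\ref{code:dcasfinalrescheck} sends every process to the reverting code. I then need to check that any $\ast\text{ptr}_1$ write made before reaching line~\ref{code:dcaslin2} was only the descriptor write at line~\ref{code:dcaslin1}, which is subsequently undone at line~\ref{code:dcasreturn4}'s preceding \cas, so $\ast\text{ptr}_1$ is restored to $\text{old}_1$ and never becomes $\text{new}_1$; similarly any speculative $\ast\text{ptr}_2$ marking at line~\ref{code:dcasword2set} is reverted (the ``p2set'' branch at line~\ref{code:dcasfinalrescheck}). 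The converse direction follows from Lemma~\ref{lemma:dcassucciscorrect}: if $\ast\text{ptr}_1$ was changed to $\text{new}_1$, then the result was \texttt{SUCCESS}, which by Lemma~\ref{lemma:dcasreschanges} excludes both failure values; and if neither pointer changed \emph{in the DCAS} with result neither \texttt{SUCCESS}, the only remaining outcomes are the two failure values, and in each the stated inequality is exactly the condition that triggered them.

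The main obstacle I expect is the bookkeeping around line~\ref{code:dcasword2set} and the ABA interaction: a helper may successfully install a marked descriptor into $\ast\text{ptr}_2$, and I must be careful that on the \texttt{SECONDFAILED} path such a helper reverts $\ast\text{ptr}_2$ (via the ``p2set'' conditional at line~\ref{code:dcasfinalrescheck}) rather than leaving it marked, and that no process ever reaches line~\ref{code:dcasnewval2} once \texttt{res} is \texttt{SECONDFAILED}. Equally delicate is confirming that the \texttt{SECONDFAILED} outcome genuinely implies $\ast\text{ptr}_2 \neq \text{old}_2$ held at \emph{some} point during the operation --- the \cas at line~\ref{code:dcaslin2} only fires inside the test $\ast\text{desc.ptr}_2.\text{ptr} \neq \text{desc}$, so I would phrase the guarantee as ``$\ast\text{ptr}_2$ did not match $\text{old}_2$ at the linearization instant'' rather than as a statement about the entire interval, mirroring the way the ordinary single-word \cas semantics are stated in Algorithm~\ref{alg:dcassem}.
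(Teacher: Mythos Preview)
Your proposal is correct and follows essentially the same line as the paper: for \texttt{FIRSTFAILED} the operation is never announced, and for \texttt{SECONDFAILED} the guard at line~\ref{code:dcasfinalrescheck} (reached because the \cas at line~\ref{code:dcaslin3} cannot succeed once \texttt{res} is \texttt{SECONDFAILED}) diverts every process before lines~\ref{code:dcasnewval1}--\ref{code:dcasnewval2}. You are in fact more thorough than the paper's own proof, which does not explicitly treat the converse direction of the ``iff'', the restoration of $\ast\text{ptr}_1$, or the reverting of speculative $\ast\text{ptr}_2$ markings; your additional bookkeeping around these points and the ABA caveat is sound and strengthens the argument.
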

\begin{proof}
If the \cas at line \ref{code:dcaslin1} fails, nothing is written to $\ast$ptr$_1$ by any processes since the operation is not announced. The \cas at line \ref{code:dcaslin3}
must fail, since otherwise the result value would not be \texttt{SECONDFAILED}. This means the test at line \ref{code:dcasfinalrescheck} will succeed and the operation will return
before line \ref{code:dcasnewval2}, which is the only place that $\ast$ptr$_2$ can be changed to new$_2$.
\end{proof}

\begin{lemma}Iff the result value of the DCAS is \texttt{SUCCESS}, then $\ast$ptr$_1$ held a descriptor at the same time as $\ast$ptr$_2$ held a marked descriptor. \label{lemma:dcasheldmarked}\end{lemma}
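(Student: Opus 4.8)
The plan is to trace the two pointer writes through the DCAS code and argue that, whenever the result is \texttt{SUCCESS}, there is a moment at which both $\ast$ptr$_1$ and $\ast$ptr$_2$ simultaneously hold the (appropriately marked) descriptor. First I would invoke Lemma \ref{lemma:dcassucciscorrect}, which already tells us that on a successful DCAS the word $\ast$ptr$_1$ passed through the value \texttt{desc} (installed at line \ref{code:dcaslin1}) and the word $\ast$ptr$_2$ passed through a marked \texttt{desc} (installed at line \ref{code:dcasword2set}); the remaining task is purely a temporal one, namely to show these two ``descriptor-held'' intervals overlap.

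Next I would pin down the endpoints of each interval. The interval during which $\ast$ptr$_1$ holds \texttt{desc} begins with the successful \cas at line \ref{code:dcaslin1} (performed exactly once, by the initiator) and ends only when some process executes the \cas at line \ref{code:dcasnewval1} writing new$_1$ (or, on a failure path, line \ref{code:dcaslin2}'s associated revert at the \texttt{SECONDFAILED} branch --- but that path cannot occur here, by Lemma \ref{lemma:dcasfailiscorrect} and Lemma \ref{lemma:dcassameres}). Crucially, line \ref{code:dcasnewval1} is only reached \emph{after} line \ref{code:dcaslin3} has set \texttt{res} to the marked descriptor and after the check at line \ref{code:dcasfinalrescheck} has been passed. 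So I would argue: the marked descriptor is installed in $\ast$ptr$_2$ at line \ref{code:dcasword2set} strictly before \texttt{res} is set to that marked descriptor at line \ref{code:dcaslin3} (since p2set must be true, or else we would reach line \ref{code:dcaslin2} and the analysis of Lemma \ref{lemma:dcassucciscorrect} shows the winning process is the one that saw $\ast$desc.ptr$_2$ $=$ \texttt{desc}), and $\ast$ptr$_1$ is only vacated at line \ref{code:dcasnewval1}, which is strictly after line \ref{code:dcaslin3}. Hence at the instant just after line \ref{code:dcaslin3} succeeds, $\ast$ptr$_1$ still holds \texttt{desc} and $\ast$ptr$_2$ holds the marked descriptor.

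I would then close the ``iff'' in the reverse direction: if at some point $\ast$ptr$_1$ holds \texttt{desc} while $\ast$ptr$_2$ holds a marked \texttt{desc}, then line \ref{code:dcaslin1} succeeded (so the result is not \texttt{FIRSTFAILED}) and line \ref{code:dcasword2set} succeeded (so \texttt{res} never gets stuck at \texttt{SECONDFAILED} via line \ref{code:dcaslin2} --- this needs the observation that \texttt{res} can only move to \texttt{SECONDFAILED} when $\ast$desc.ptr$_2$ does not point to \texttt{desc}), whence by Lemma \ref{lemma:dcasreschanges} and Lemma \ref{lemma:dcassameres} the result is \texttt{SUCCESS}. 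The main obstacle I anticipate is the ABA subtlety at line \ref{code:dcasword2set} flagged in Lemma \ref{lemma:dcassucciscorrect}: $\ast$ptr$_2$ may hold a marked descriptor, be reverted to old$_2$ by an outside process, and then be re-marked by another helper, so ``$\ast$ptr$_2$ holds a marked descriptor'' is not a single contiguous interval. I would handle this by being careful to talk about the \emph{specific} write at line \ref{code:dcasword2set} that precedes the successful \cas on \texttt{res} at line \ref{code:dcaslin3}, and to note that line \ref{code:dcaslin1}'s write to $\ast$ptr$_1$ --- being done once by the initiator and removed only at line \ref{code:dcasnewval1} after \texttt{res} is decided --- spans all of these re-markings, so it certainly overlaps the relevant one.
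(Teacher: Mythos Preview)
Your proposal is correct and rests on the same core observation as the paper: a successful DCAS requires the \cas at line~\ref{code:dcaslin1} and the \cas at line~\ref{code:dcasword2set} to have succeeded, and neither pointer is overwritten with its final value until lines~\ref{code:dcasnewval1} and~\ref{code:dcasnewval2}, so there is a moment at which both hold the descriptor. The paper's proof simply picks ``just before the first process reaches line~\ref{code:dcasnewval1}'' as that moment, whereas you anchor it at the instant line~\ref{code:dcaslin3} succeeds; these are equivalent since \ref{code:dcaslin3} must precede \ref{code:dcasnewval1} on the success path.

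Where you go further than the paper is in (i) explicitly arguing the reverse direction of the ``iff'', which the paper's three-sentence proof does not address at all, and (ii) confronting the ABA subtlety at line~\ref{code:dcasword2set} that makes the interval during which $\ast$ptr$_2$ holds a marked descriptor potentially non-contiguous. Your handling of both is sound; in particular, your observation that the single write to $\ast$ptr$_1$ by the initiator persists across all re-markings of $\ast$ptr$_2$ until line~\ref{code:dcasnewval1} is exactly the right way to dispose of the ABA concern. So your argument is not merely equivalent to the paper's but strictly more complete.
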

\begin{proof}
Line \ref{code:dcasnewval1} can only be reached if the {\cas}s at lines \ref{code:dcaslin1} and \ref{code:dcasword2set} were successful. The values
of $\ast$ptr$_1$ and $\ast$ptr$_2$ are not changed back until lines \ref{code:dcasnewval1} and \ref{code:dcasnewval2}, so just before the first process
reaches line \ref{code:dcasnewval1} $\ast$ptr$_1$ holds a descriptor and $\ast$ptr$_2$ holds a marked descriptor.
\end{proof}

\begin{lemma}If the initiating process protects $\ast$ptr$_1$ and $\ast$ptr$_2$ with hazard pointers, they will not be written to by any helping process
unless that process also protects them with hazard pointers.\label{lemma:dcasworkswithhp}\end{lemma}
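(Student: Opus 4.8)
The plan is to track every place where a helping process can write to \texttt{desc.ptr}$_1$ or \texttt{desc.ptr}$_2$ inside the \texttt{DCAS} procedure, and to show that each such write is preceded, along the helper's execution path, by the assignment on line~\ref{code:dcassethp} that copies \texttt{desc.hp}$_1$ and \texttt{desc.hp}$_2$ into the helper's own hazard pointers. Since the initiating process is assumed to have published $\ast$\texttt{ptr}$_1$ and $\ast$\texttt{ptr}$_2$ in the descriptor's \texttt{hp}$_1$ and \texttt{hp}$_2$ fields before any helper can observe the descriptor (the descriptor is only reachable through $\ast$\texttt{ptr}$_1$ or $\ast$\texttt{ptr}$_2$, which are set after the descriptor is fully initialized in the \texttt{move}/\texttt{scas} code), a helper that reaches line~\ref{code:dcassethp} thereby protects exactly the two words it may later modify.

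First I would enumerate the write sites. A process enters \texttt{DCAS} as a helper only via the \texttt{read} operation, and line~\ref{code:dcassethp} is executed unconditionally at the very top of \texttt{DCAS} whenever \texttt{initiator} is false, i.e.\ before any branch. The candidate writes to the two words are: the self-help {\cas}s on lines~\ref{code:dcashelpself1} and~\ref{code:dcashelpself2}; the revert {\cas} on \texttt{desc.ptr}$_1$ following a \texttt{SECONDFAILED} result (line~\ref{code:dcasreturn4}'s block); the {\cas} on \texttt{desc.ptr}$_2$ on line~\ref{code:dcasword2set}; the revert of \texttt{desc.ptr}$_2$ in the \texttt{SECONDFAILED} block after line~\ref{code:dcasfinalrescheck}; and the final installation of the new values on lines~\ref{code:dcasnewval1} and~\ref{code:dcasnewval2}. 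Every one of these lines lies strictly below line~\ref{code:dcassethp} in the procedure body, and none of them is reachable without first executing line~\ref{code:dcassethp} when the caller is a helper, so in each case the helper has already written \texttt{desc.hp}$_1$ and \texttt{desc.hp}$_2$ to its hazard-pointer slots.

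Next I would argue that the values copied on line~\ref{code:dcassethp} are precisely $\ast$\texttt{ptr}$_1$ and $\ast$\texttt{ptr}$_2$: by hypothesis the initiating process set \texttt{desc.hp}$_1 \gets {}$(the location it protects as $\ast$\texttt{ptr}$_1$) and likewise for \texttt{hp}$_2$, and these descriptor fields are immutable once the descriptor becomes visible (no line of \texttt{DCAS} writes to \texttt{hp}$_1$ or \texttt{hp}$_2$). Hence protecting \texttt{desc.hp}$_1$, \texttt{desc.hp}$_2$ is the same as protecting $\ast$\texttt{ptr}$_1$, $\ast$\texttt{ptr}$_2$. Combining this with the enumeration above yields the claim: any helper that writes to either word does so only while holding hazard pointers on both.

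The main obstacle is the bracketed, optional nature of the hazard-pointer machinery in the pseudocode — line~\ref{code:dcassethp} and the \texttt{hp} fields appear in square brackets, signalling they are present only when the underlying memory reclamation scheme requires them. The argument must therefore be stated under the standing assumption of the lemma ("\emph{if} the initiating process protects $\ast$\texttt{ptr}$_1$ and $\ast$\texttt{ptr}$_2$ with hazard pointers"), which presupposes the bracketed code is in force; I would make this explicit and also note the one subtlety that a helper reads the descriptor pointer out of $\ast$\texttt{ptr}$_i$ and validates it (the \texttt{hp}$_d$ check in \texttt{read}) before calling \texttt{DCAS}, so the descriptor — and thus its \texttt{hp} fields — cannot have been reclaimed out from under the helper before line~\ref{code:dcassethp} runs.
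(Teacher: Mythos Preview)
Your enumeration of the write sites is correct, and so is the observation that line~\ref{code:dcassethp} precedes all of them for a helper. But the argument has a real gap: in the hazard-pointer protocol, merely \emph{assigning} a value to a hazard-pointer slot does not by itself protect the referenced memory. The assignment is safe only if the target was still protected (by someone) at the instant the helper wrote its own hazard pointer; otherwise the node may already have been retired and reclaimed, and the helper's subsequent {\cas} would touch freed memory. Your proof never establishes that the words referenced by \texttt{desc.hp}$_1$ and \texttt{desc.hp}$_2$ are still live when the helper executes line~\ref{code:dcassethp}. Protecting the \emph{descriptor} via \texttt{hp}$_d$ in \texttt{read}, which you do note, guarantees only that the descriptor's fields can be read; it says nothing about the liveness of the nodes those fields point to.

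The paper closes this gap with a case split on the test at line~\ref{code:dcashelpselftest}. If that test \emph{fails}, then \texttt{res} has not yet reached a final value, so the initiator has not yet returned and its hazard pointers on $\ast$\texttt{ptr}$_1$, $\ast$\texttt{ptr}$_2$ are still published; hence the helper's copies made earlier at line~\ref{code:dcassethp} were taken while the memory was still protected, and are therefore valid. If the test \emph{succeeds}, the initiator may already have returned, so the line~\ref{code:dcassethp} copies cannot be trusted---but in that branch the only writes are at lines~\ref{code:dcashelpself1} and~\ref{code:dcashelpself2}, each of which targets the very word the helper arrived through in \texttt{read}, and that word is already protected by \texttt{hp}$_d$. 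You need this dichotomy; the uniform ``line~\ref{code:dcassethp} comes first'' argument does not suffice.
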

\begin{proof}
If the initiating process protects the words, they will not be unprotected until that process returns, at which point the final result value must have been set.
This means that if the test at \ref{code:dcashelpselftest} fails for a helping process, the words were protected when the process local hazard pointers were
set at line \ref{code:dcassethp}. If the test did not fail, then the words are not guaranteed to be protected. But in that case the word that is written to at line
\ref{code:dcashelpself1} or \ref{code:dcashelpself2} is the same word that was read in the \texttt{read} operation. That word must have been protected
earlier by the process calling it, as otherwise it could potentially read from invalid space. Thus the words are either protected by the hazard pointers set at line
\ref{code:dcassethp} or by hazard pointers set before calling the \texttt{read} operation.
\end{proof}

\begin{lemma}DCAS is lock-free.\label{lemma:dcasislockfree}\end{lemma}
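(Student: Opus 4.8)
The plan is to lean on the fact that the \texttt{DCAS} procedure of Algorithm~\ref{alg:dcascode}, on both its initiator path and its helper path, is straight-line code: it contains no loop and makes no recursive call (it never invokes \texttt{DCAS} or \texttt{read}). First I would walk once through the procedure and verify that from every branch a \texttt{return} is reached after a bounded number of instructions, each of which is a local read, a local write, or a single \cas. This shows that a process that invokes \texttt{DCAS} returns within a bounded number of \emph{its own} steps, no matter what the other processes do, so the \texttt{DCAS} operation taken by itself is in fact wait-free; in particular, contention among concurrent \texttt{DCAS} invocations cannot cause a livelock, since none of them ever retries.

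The work therefore goes into showing that the helping triggered by \texttt{read} does not destroy this, and this is where I expect the difficulty to lie. The only loop anywhere in the module is the \texttt{while} of \texttt{read}, which runs only while \texttt{*ptr} holds a descriptor and whose body is one bounded helper call \texttt{DCAS(result, false)} plus a constant amount of work, so a single iteration is bounded; what must be ruled out is a \texttt{read} whose loop never terminates. I would argue that a given DCAS operation can make a given \texttt{read} iterate only a constant number of times: by Lemma~\ref{lemma:dcasreschanges} and Lemma~\ref{lemma:dcassameres}, once that operation's \texttt{res} field is terminal it stays terminal, and from then on every \texttt{DCAS} call on its descriptor short-circuits at line~\ref{code:dcashelpselftest} and on the way out swings the word that held the descriptor to new$_i$ or back to old$_i$, that is, to a non-descriptor value; moreover a word serving as ptr$_1$ carries the unmarked descriptor only in the bounded window between the \cas of line~\ref{code:dcaslin1} and the cleanup of line~\ref{code:dcasnewval1} (or line~\ref{code:dcasreturn4}), and since that descriptor object is allocated afresh by its own operation it never reappears there, while the marked descriptor can recur on a ptr$_2$ word only through the ABA interaction analysed in the proof of Lemma~\ref{lemma:dcassucciscorrect}, which belongs wholly to that one operation and stops once its \texttt{res} is fixed. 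Hence a non-terminating \texttt{read} would force the descriptors of infinitely many distinct DCAS operations to appear in \texttt{*ptr}; each such operation was announced by a successful \cas at line~\ref{code:dcaslin1} performed by its initiator inside a bounded \texttt{DCAS} call, a process performs that \cas at most once per invocation and must return from one invocation before starting the next, so over the fixed set of processes infinitely many \texttt{DCAS} invocations would have to return after the supposed point of stagnation, contradicting the assumption underlying the non-termination that from some point on no operation of the object completes.

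Putting the two halves together yields the lemma: in an infinite execution either some process returns from a \texttt{DCAS} call, or some \texttt{read} spins, and in the latter case the counting argument above still forces infinitely many \texttt{DCAS} invocations to return; either way some operation of the DCAS object completes infinitely often, which is lock-freedom. The main obstacle is the bookkeeping in the middle paragraph: attaching every descriptor value ever observed in \texttt{*ptr} to a well-defined operation, bounding how long it can linger there once that operation's result has been decided, and handling the marked-descriptor ABA on ptr$_2$ correctly. All of this is already governed by Lemmas~\ref{lemma:dcasreschanges}--\ref{lemma:dcassucciscorrect} together with the freshness of descriptor allocation, so it should amount to a careful assembly of facts already in hand rather than a genuinely new argument.
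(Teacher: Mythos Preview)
Your proposal is correct and follows essentially the same route as the paper: both isolate the \texttt{while} loop in \texttt{read} as the only potential source of non-termination and then argue that a single DCAS operation can account for only boundedly many descriptor sightings on a word, so an infinite spin would force infinitely many fresh DCAS operations to be announced, i.e.\ system-wide progress. The paper's version is terser and pins the per-operation bound explicitly at $p-1$ (one erroneous helping write to $\ast$ptr$_2$ per process, after which that process is filtered out by the test at line~\ref{code:dcashelpselftest}), whereas you additionally spell out that the \texttt{DCAS} procedure itself is straight-line and hence wait-free before turning to \texttt{read}; the underlying argument is the same.
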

\begin{proof}
The only loop in DCAS is part of the read operation that is repeated until the word read is no longer a DCAS descriptor.
The word can be assigned the same descriptor, with different process id, for a maximum number of $p-1$ times, where $p$ is the number of processes in
the system. This can happen when each helping process manages to write to $\ast$ptr$_2$ due to the ABA-problem
mentioned earlier. This can only happen once for each process per descriptor as it will not get past the test on line
\ref{code:dcashelpselftest} a second time.

So, a descriptor appearing on a word means that either a process has started a new DCAS operation
or that a process has made an erroneous helping attempt. Either way, one process must have made progress
for this to happen, which makes the DCAS lock-free.
\end{proof}

\begin{theorem}The DCAS is lock-free and linearizable with possible linearization points at \ref{code:dcaslin1}, \ref{code:dcaslin2}, and \ref{code:dcaslin3},
and follows the semantics as specified in Algorithm \ref{alg:dcassem}.
\label{theorem:dcasworks}
\end{theorem}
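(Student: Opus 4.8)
The plan is to get lock-freedom for free from Lemma \ref{lemma:dcasislockfree}, and then obtain linearizability (with the stated candidate points) by a case analysis on the agreed result value, leaning on the effect lemmas \ref{lemma:dcassucciscorrect}, \ref{lemma:dcasfailiscorrect} and \ref{lemma:dcasheldmarked} to match what actually happens in memory with the sequential specification in Algorithm \ref{alg:dcassem}.

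First I would set up the bookkeeping: by Lemma \ref{lemma:dcassameres} the initiator and every helper agree on one result value $r\in\{\texttt{FIRSTFAILED},\texttt{SECONDFAILED},\texttt{SUCCESS}\}$, and by Lemma \ref{lemma:dcasreschanges} the \texttt{res} field only ever follows $\texttt{UNDECIDED}\to\texttt{SECONDFAILED}$ or $\texttt{UNDECIDED}\to(\text{marked})\to\texttt{SUCCESS}$, so $r$ is well defined and Algorithm \ref{alg:dcassem} returns a value of the same kind. It then remains to exhibit, in each case, a point inside the DCAS operation's interval at which the state transition prescribed by Algorithm \ref{alg:dcassem} is consistent with the actual contents of $\ast$ptr$_1$ and $\ast$ptr$_2$. \emph{Case $r=\texttt{FIRSTFAILED}$:} linearize at the initiator's failed \cas at line \ref{code:dcaslin1}; there $\ast$ptr$_1\neq$ old$_1$, and by Lemma \ref{lemma:dcasfailiscorrect} no word is ever written, matching the first branch of the specification. \emph{Case $r=\texttt{SECONDFAILED}$:} linearize at the successful \cas of line \ref{code:dcaslin2} that drives \texttt{res} to \texttt{SECONDFAILED}; at that instant $\ast$ptr$_1$ still holds the unmarked descriptor (it is reverted only afterwards) and $\ast$ptr$_2$ does not hold it, and tracing back to the failed \cas at line \ref{code:dcasword2set} gives $\ast$ptr$_2\neq$ old$_2$ — so, reading the descriptor in $\ast$ptr$_1$ as old$_1$ (see below), this matches the second branch, with Lemma \ref{lemma:dcasfailiscorrect} guaranteeing no new value is installed. \emph{Case $r=\texttt{SUCCESS}$:} linearize at the \cas of line \ref{code:dcaslin3} that installs the marked descriptor into \texttt{res}; by Lemma \ref{lemma:dcasheldmarked} both words then hold (possibly marked) descriptors standing for old$_1$, old$_2$, and by Lemma \ref{lemma:dcassucciscorrect} they are subsequently moved to new$_1$, new$_2$ exactly once, so from the point of view of any other operation the pair flips atomically at line \ref{code:dcaslin3}.

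The step I expect to be the main obstacle is the descriptor-indirection invariant that the \texttt{SECONDFAILED} and \texttt{SUCCESS} cases silently rely on: while $\ast$ptr$_i$ holds (a marked copy of) the descriptor, every other operation that touches that word must behave exactly as though it read old$_i$ as long as the DCAS has not committed at line \ref{code:dcaslin3}, and exactly as though it read new$_i$ afterwards. For DCAS-internal accesses this is read off from lines \ref{code:dcaslin1}, \ref{code:dcasword2set}, and \ref{code:dcasnewval1}--\ref{code:dcasnewval2}; for the \texttt{read} operation it holds because \texttt{read} first drives the pending DCAS to completion (line \ref{code:isdesccheck}) and only then returns, so it never hands back a raw descriptor, while Lemma \ref{lemma:dcassameres} stops the helper it invokes from disagreeing about $r$, and Lemma \ref{lemma:dcasworkswithhp} keeps all of these helping accesses memory-safe. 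The ABA subtlety — that $\ast$ptr$_2$ may carry a marked descriptor several times — is exactly what Lemma \ref{lemma:dcassucciscorrect} already resolves, so it does not reopen here.

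Finally I would check that each chosen point lies in the operation's interval: line \ref{code:dcaslin1} is the initiator's own step; line \ref{code:dcaslin2}, when it fixes \texttt{res}, happens-after line \ref{code:dcaslin1} and strictly before the initiator can observe \texttt{res}${}=\texttt{SECONDFAILED}$ and return; and the successful line \ref{code:dcaslin3} likewise happens-after line \ref{code:dcaslin1} and before the initiator, which always passes through lines \ref{code:dcaslin3}--\ref{code:dcasreschange3} (or observes \texttt{res}${}=\texttt{SUCCESS}$) before returning. Combining this with the three cases above yields that DCAS is linearizable with possible linearization points at lines \ref{code:dcaslin1}, \ref{code:dcaslin2}, and \ref{code:dcaslin3} and obeys the semantics of Algorithm \ref{alg:dcassem}, while lock-freedom is Lemma \ref{lemma:dcasislockfree}.
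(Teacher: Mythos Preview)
Your proposal is correct and follows essentially the same route as the paper: invoke Lemma~\ref{lemma:dcassameres} for agreement on the result, do a three-way case split on that result using Lemmata~\ref{lemma:dcassucciscorrect}, \ref{lemma:dcasfailiscorrect}, and \ref{lemma:dcasheldmarked} to match the semantics of Algorithm~\ref{alg:dcassem}, and cite Lemma~\ref{lemma:dcasislockfree} for lock-freedom. If anything, your version is more careful than the paper's own argument: you explicitly verify that each linearization point lies within the operation's interval, you spell out the descriptor-indirection invariant that lets a descriptor in $\ast$ptr$_i$ be read as old$_i$ (resp.\ new$_i$) before (resp.\ after) line~\ref{code:dcaslin3}, and you bring in Lemma~\ref{lemma:dcasworkswithhp} for memory safety of helpers---points the paper's terse proof leaves implicit.
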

\begin{proof}
Lemma \ref{lemma:dcassameres} gives that all processes return the same result value. According to Lemmata \ref{lemma:dcassucciscorrect}
and \ref{lemma:dcasfailiscorrect}, the result value can be seen as deciding the outcome of the DCAS. The result value is set at \ref{code:dcaslin2}
and \ref{code:dcaslin3}, which become possible linearization points. It is also set at \ref{code:dcasreschange3}, but that comes as a consequence of
the \cas at line \ref{code:dcaslin3}. The final candidate for linearization point happens when the \cas at line \ref{code:dcaslin3} fails. This happens before
the operation is announced so we do not need to set the \texttt{res} variable.

Lemma \ref{lemma:dcassucciscorrect}  proves that when the DCAS is successful it has changed both $\ast$ptr$_1$
and $\ast$ptr$_2$ to an intermediate state from a state where they were equal to old$_1$ and old$_2$, respectively.
Lemma \ref{lemma:dcasheldmarked} proves that they were in this intermediate state at the same time before they got their new values, according to Lemma \ref{lemma:dcassucciscorrect} again. If the DCAS was unsuccessful then nothing is changed due to either $\ast$ptr$_1\neq$old$_1$ or $\ast$ptr$_2\neq$old$_2$.
This is in accordance with the semantics specified in Algorithm \ref{alg:dcassem}.

Lemma \ref{lemma:dcasislockfree} gives that DCAS is lock-free.
\end{proof}

\subsection{Move Operation}

\begin{theorem} The \emph{move} operation is linearizable and lock-free if applied to two lock-free move-ready concurrent objects.
\label{theorem:moveworks}\end{theorem}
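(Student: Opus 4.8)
The plan is to reduce the statement to Theorem~\ref{theorem:dcasworks} together with the four requirements of a move-candidate. The key observation is that, once both objects are move-ready, the linearizing \cas of the remove and the linearizing \cas of the insert have been textually replaced by \scas, and in a real \texttt{move} invocation (\texttt{desc}~$\neq$~0) those two {\cas}s are exactly the two pointer updates that a single \texttt{DCAS} call (line~\ref{code:maindcascall}) performs. So linearizability reduces to two claims: (i) when the \texttt{DCAS} succeeds, the removal and the insertion both take effect at its single linearization point; and (ii) when \texttt{move} returns \texttt{false}, the shared state is left as if the whole \texttt{move} had executed at one of the failing init-phases.

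For (i), by requirement~\ref{req:caslin} the linearization point of a successful remove (resp.\ insert) is a successful \cas on a pointer performed by the invoking process, and---crucially---such a success ``can never lead to an unsuccessful'' operation; by requirement~\ref{req:nogap} the removed element is already in hand when the remove reaches that point, so steps~1--2 of the algorithm can hand it to the insert's init-phase. When the \texttt{DCAS} returns \texttt{SUCCESS}, Theorem~\ref{theorem:dcasworks} gives a single instant at which $\ast$ptr$_1$ goes from old$_1$ to new$_1$ and $\ast$ptr$_2$ goes from old$_2$ to new$_2$. From the source object this instant is precisely the \cas that linearizes the removal, and from the target object precisely the \cas that linearizes the insertion, so both underlying operations may be linearized there and, by the last clause of requirement~\ref{req:caslin}, both then run to successful completion; the cleanup-phases (steps~4--5) do no externally visible work because each object is linearizable with exactly those {\cas}s as linearization points. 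The source object's own correctness argument applies verbatim with \scas read as ``\cas'', since the \texttt{DCAS} restricted to ptr$_1$ behaves exactly as that \cas; symmetrically for the target. Requirement~\ref{req:conc} is used here to guarantee that the two underlying operations can be pending simultaneously without one invalidating the other (no shared locks or hazard pointers), and the no-helping clause of requirement~\ref{req:caslin} guarantees that the underlying correctness proofs do not rely on other processes finishing an operation---which matters because the whole insert plus a \texttt{DCAS} has been spliced inside the remove. The descriptor-in-a-word transients produced by \texttt{DCAS} are not observable as committed values: \texttt{DCAS} is itself linearizable, and by the move-ready changes every access to a word that may hold a descriptor goes through \texttt{read}, which helps that \texttt{DCAS} to completion before returning.

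For (ii) and for lock-freedom: if the remove's init-phase fails, \scas is never reached, no \texttt{DCAS} runs, and \texttt{move} linearizes where the removal observed its failure; if the insert's init-phase fails, the remove's \scas returns \texttt{ABORT}, the remove aborts, and \texttt{move} linearizes where the insert detected it could not proceed---and since at most a \texttt{FIRSTFAILED}/\texttt{SECONDFAILED} \texttt{DCAS} has executed, Lemma~\ref{lemma:dcasfailiscorrect} says both words are unchanged, so the pre-\texttt{move} state is preserved. \texttt{DCAS} and \texttt{read} are lock-free (Lemma~\ref{lemma:dcasislockfree}), so the only way \texttt{move} can diverge is through the retry loops it inherits from the remove and insert. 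Any loop iteration executed on behalf of a non-terminating \texttt{move} that does not terminate it can be charged to a competing operation having completed: an extra remove iteration happens only when its \scas returns \texttt{false}, i.e.\ the \texttt{DCAS} reported \texttt{FIRSTFAILED}, meaning $\ast$ptr$_1$ had been moved off old$_1$; an extra insert iteration happens only on \texttt{SECONDFAILED}, meaning $\ast$ptr$_2$ had been moved off old$_2$ (or the insert reaches a terminal failure and stops); and the pre-linearization retries internal to the underlying lock-free remove and insert are already charged this way. Hence an infinite \texttt{move} forces unbounded system-wide progress, so \texttt{move} is lock-free.

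The step I expect to be the main obstacle is part~(i): rigorously ruling out any observable intermediate state in which the element is in neither object (or visibly in the target before \texttt{move} commits). This is exactly where all four requirements plus the linearizability of \texttt{DCAS} must be combined, and where one has to argue that interposing the insert and the \texttt{DCAS} between the remove's invocation and its now-deferred linearizing \cas does not disturb the source object's invariants. A secondary nuisance is the lock-freedom accounting once the insert's own non-linearizing {\cas}s are allowed to interleave with repeated \texttt{DCAS} failures.
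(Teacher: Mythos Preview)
Your proposal is correct and follows essentially the same route as the paper: case-split on whether the remove init-phase, the insert init-phase, or the \texttt{DCAS} succeeds, place the linearization point accordingly (in the failing sub-operation or at the \texttt{DCAS}), and argue lock-freedom by charging every failed \texttt{DCAS} to another process's progress. If anything, your argument is more careful than the paper's own---you explicitly invoke each move-candidate requirement where it is needed, cite Lemma~\ref{lemma:dcasfailiscorrect} for the failure case, and note that descriptor transients are hidden by the mandatory \texttt{read} wrapper---whereas the paper's proof is a terse paragraph that leaves several of these points implicit.
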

\begin{proof}
We consider DCAS an atomic operation as shown by Theorem \ref{theorem:dcasworks}.
All writes, except the ones done by the DCAS operation, are process local and can as such be ignored.

The move operation starts with an invocation of the remove operation. If it fails, it means that there were no elements to remove from the object and
that the linearization point must lie somewhere in the remove operation, since requirement 1 of the definition of a move-candidate
states that the operations should be linearizable.
If the process reaches the \scas call, the insert operation is invoked with the element to be removed as an argument. If the insert fails
it means that it was not possible to insert the element. In this case \texttt{insfailed} was not set at line \ref{code:insfailedset} and \scas will abort the remove operation.
In this case, the linearization point is somewhere in the insert operation. In both these scenarios, the move operation as a whole fails.

If the process reached the second \scas call, the one in the insert operation, the DCAS operation is invoked.
If it is successful, then both the insert and remove operation must have succeeded according to requirement 3 of the definition of a move-candidate.
By requirement 1, they can only succeed once, which makes the DCAS the linearization point.
If the DCAS fails nothing is written to the shared memory and either the insert or both the remove and the insert operations are restarted.

Since the insert and remove operations are lock-free, the only reason for the DCAS to fail is that another process has made progress in their insertion or
removal of an element. This makes the move operation as a whole lock-free.\end{proof}

\section{Case Study}
To get a better understanding of how our methodology can be used in practice, we apply it to two commonly used
concurrent objects, the lock-free queue by Michael and Scott \cite{msqueue} and the lock-free stack
by Treiber \cite{lfstack}. The objects use hazard pointers for memory management and the selection
of them is motivated in the paper by Michael \cite{hazard}.

\begin{algorithm}[t]
\ListLetter{Q}
\begin{lstlisting}[mathescape=true,numbers=left,numbersep=-12.5pt, name = queue]
   bool enqueue(val)
    node $\gets$ new Node
    node.next $\gets$ 0
    node.val $\gets$ val
    while(true)
      ltail $\gets$ read(tail) (*\label{code:qget1}*)
      hp$_1$ $\gets$ ltail; if(hp$_1$ != read(tail) continue (*\label{code:qget2}*)
      lnext $\gets$ read(ltail.next) (*\label{code:qget3}*)
      hp$_2$ $\gets$ lnext
      if(ltail != read(tail)) continue (*\label{code:qget4}*)
      if(lnext != 0)
        cas(tail,ltail,lnext)
        continue
      res $\gets$ scas(ltail.next,0,node,hp$_1$) (*\label{code:enqlin}*)
      if(res = abort)
        free node  (*\label{code:lfqfreemem}*)
        return false
      if(res = true)
        cas(tail,ltail,node)
        return true
\end{lstlisting}

\begin{lstlisting}[mathescape=true,numbers=left,numbersep=-12.5pt, name = queue]
   bool dequeue($\ast$val)
    while(true)
      lhead $\gets$ read(head) (*\label{code:qget5}*)
      hp$_3$ $\gets$ lhead; if(hp$_3$ != read(head) continue (*\label{code:qget6}*)
      ltail $\gets$ read(tail) (*\label{code:qget7}*)
      lnext $\gets$ read(lhead.next) (*\label{code:qget8}*)
      hp$_4$ $\gets$ lnext
      if(lhead!=read(head)) continue (*\label{code:qget9}*)
      if(lnext=0) return false (*\label{code:qfalselin}*)
      if(lhead==ltail)
        cas(tail,ltail,lnext)
        continue
      $\ast$val $\gets$ lnext.val (*\label{code:valread}*)
      if(scas(head,lhead,lnext,val,hp$_3$) (*\label{code:deqlin}*)
        free lhead
        return true
\end{lstlisting}
\caption{Lock-free queue by Michael and Scott \cite{msqueue}.}
\end{algorithm}

\subsection{Queue}
The first task is to see if the queue is a move-candidate as defined by Definition \ref{def:move-cand}:

\begin{enumerate}
\item The queue fulfills the first requirement by providing dequeue and enqueue operations, which have been shown
to be linearizable \cite{msqueue}.
\item The insert and remove operations share hazard pointers in the original implementation. By using a separate set
of hazard pointers for the dequeue operation we fulfill requirement number 2, as no other information is shared
between two instances of the object.
\item The linearization points can be found on lines \ref{code:deqlin}, and \ref{code:enqlin} and both consist of a
successful \cas, which is what requirement number 3 asks for. There is also a linearization point at line \ref{code:qfalselin},
but it is not taken in the case of a successful dequeue. These linearization points were provided together with the algorithmic description of the object, which is usually the case
for the concurrent linearizable objects that exist in the literature.
\item The linearization point for the dequeue is on line \ref{code:deqlin} and the value that is
read in case of a successful \cas is available on line \ref{code:valread},
which must be executed before line \ref{code:deqlin}.
\end{enumerate}

The above simple observations give us the following lemma in a straight forward way.
\begin{lemma}The lock-free queue by Michael and Scott is a move-candidate.\end{lemma}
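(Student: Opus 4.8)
The plan is to verify each of the four requirements in Definition~\ref{def:move-cand} directly against the algorithmic description of the Michael--Scott queue, exactly as the enumerated discussion preceding the lemma already sets up. The proof is essentially a checklist, so the work is in confirming that the cited facts about the queue (linearizability of \texttt{enqueue}/\texttt{dequeue}, the location of the linearization points, the independence of the two hazard-pointer sets) genuinely discharge the formal conditions, and then packaging this into a short argument.

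First I would cite Michael and Scott~\cite{msqueue} for the fact that \texttt{enqueue} and \texttt{dequeue} are linearizable operations inserting and removing a single element, which is requirement~\ref{req:twoops}. Next, for requirement~\ref{req:conc}, I would observe that the only state shared between two instances of the queue in the original implementation is the hazard-pointer pool; once the \texttt{dequeue} operation is given its own disjoint set of hazard pointers ($\mathrm{hp}_3,\mathrm{hp}_4$ versus $\mathrm{hp}_1,\mathrm{hp}_2$), an \texttt{enqueue} on one instance and a \texttt{dequeue} on another touch disjoint memory and cannot invalidate one another, so both can succeed simultaneously. For requirement~\ref{req:caslin}, I would point to lines~\ref{code:enqlin} and~\ref{code:deqlin}: the \texttt{enqueue} linearizes at the successful \cas that links the new node into \texttt{ltail.next}, and the \texttt{dequeue} linearizes at the successful \cas that swings \texttt{head} forward; both are {\cas}s on pointer-valued words performed by the invoking process (the queue uses no helping), and in each case a successful \cas at that point forces the operation to return success, so the associated successful \cas never leads to an unsuccessful operation. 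The linearization point at line~\ref{code:qfalselin} is the empty-queue case and is irrelevant to a successful \texttt{dequeue}. Finally, for requirement~\ref{req:nogap}, I would note that in \texttt{dequeue} the value being removed is read into \texttt{*val} at line~\ref{code:valread}, which is strictly before the linearization-point \cas at line~\ref{code:deqlin}, so the removed element is accessible before the linearization point.

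Having checked all four conditions, I would conclude that the queue satisfies Definition~\ref{def:move-cand} and hence is a move-candidate, which is the claim. Since every requirement is verified, the lemma follows.

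I do not expect a serious obstacle here: the statement is deliberately a routine consequence of known properties of the Michael--Scott queue, and the enumerated list just above the lemma already contains the substance of the argument. The only mild subtlety is requirement~\ref{req:conc} --- one must be careful to note that the shared hazard pointers are the \emph{sole} source of cross-instance interference and that splitting them into disjoint sets for enqueue and dequeue fully removes it; and for requirement~\ref{req:caslin} one must explicitly invoke the absence of helping in Treiber-style/MS-style algorithms so that the linearizing \cas belongs to the invoking process. Neither point requires any calculation.
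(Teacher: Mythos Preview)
Your proposal is correct and mirrors the paper's own argument essentially point for point: the paper's proof is simply the four-item enumerated checklist immediately preceding the lemma, verifying requirements~\ref{req:twoops}--\ref{req:nogap} via the same observations you make (linearizability from~\cite{msqueue}, disjoint hazard-pointer sets, the \cas linearization points at lines~\ref{code:enqlin} and~\ref{code:deqlin}, and the read at line~\ref{code:valread} preceding line~\ref{code:deqlin}). There is nothing to add or correct.
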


After making sure that the queue is a move-candidate we need to replace the \cas operations at the linearization
points on lines \ref{code:deqlin} and \ref{code:enqlin}
with calls to the \scas operation. If we are using a software implementation of DCAS we also
need to alter all lines where words are read that could be part of a DCAS, so that they access them via the read
operation. For the queue these changes need to be done on lines \ref{code:qget1}, \ref{code:qget2}, \ref{code:qget3},
\ref{code:qget4}, \ref{code:qget5}, \ref{code:qget6}, \ref{code:qget7}, \ref{code:qget8}, and \ref{code:qget9}.

One must also handle the case of \scas returning \texttt{ABORT} in the enqueue.
 Since there has been no change to
the queue, the only thing to do before returning from the operation is to free up the allocated memory on line \ref{code:lfqfreemem}.
The enqueue cannot fail so there is no need to handle the \texttt{ABORT} result value in the dequeue operation.

The move operation can now be used with the queue. In Section \ref{sec:experiments} we evaluate the performance of
the move-ready queue  when combined with another queue, and when combined with the Treiber stack.

\begin{algorithm}[t]

\caption{Lock-free stack by Treiber \cite{lfstack}.}
\ListLetter{S}
\begin{lstlisting}[mathescape=true,numbers=left,numbersep=-12.5pt, name = stack]
   bool push(val)
    node $\gets$ new Node
    node.val $\gets$ val
    while(true)
      ltop $\gets$ read(top)  (*\label{code:topread1}*)
      node.next $\gets$ ltop
      res $\gets$ scas(top,ltop,node) (*\label{code:pushlin}*)
      if(res = abort)
        free node
        return false
      if(res = true)
        return true
\end{lstlisting}
\begin{lstlisting}[mathescape=true,numbers=left,numbersep=-12.5pt, name = stack]
   bool pop(val)
    while(true)
      ltop $\gets$ read(top)  (*\label{code:topread2}*)
      if(ltop = 0)
        return false (*\label{code:stackfalselin}*)
      hp $\gets$ ltop
      if(read(top) != ltop) (*\label{code:topread3}*)
        continue
      val $\gets$ ltop.val  (*\label{code:stackval}*)
      if(scas(top,ltop,ltop.next,val))  (*\label{code:poplin}*)
        free ltop
        return true
\end{lstlisting}


\end{algorithm}

\subsection{Stack}

Once again we first check to see if the stack fulfils the requirements of the move-candidate definition:

\begin{enumerate}
\item The push and pop operations are used to insert and remove elements and it has been shown that they are linearizable.
Vafeiadis has, for example, given a formal proof of this \cite{stackline}.
\item There is nothing shared between instances of the object, so the push and pop operations can succeed simultaneously.
\item The linearization points on lines \ref{code:pushlin} and \ref{code:poplin} are both \cas operations. The linearization point on line \ref{code:stackfalselin}
is not a \cas, but it is only taken when the source stack is empty and when the move can not succeed. The conditions in the definition only require
successful operations to be associated to a successful \cas.
\item The element to be removed is available on line \ref{code:stackval}, which is before the linearization point on line \ref{code:poplin}.
\end{enumerate}

The above simple observations give us the following lemma in a straight forward way.
\begin{lemma}The lock-free stack by Treiber is a move-candidate.\end{lemma}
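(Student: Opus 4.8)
The plan is to check, in turn, each of the four requirements of Definition~\ref{def:move-cand} against the code of Treiber's stack; since the lemma is simply a matter of matching the algorithm to the definition, no new techniques are needed, and the only external inputs are known linearizability results.

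First I would handle Requirement~\ref{req:twoops}: \texttt{push} and \texttt{pop} are exactly the single-element insertion and removal operations required, and their linearizability is already on record --- I would cite Vafeiadis's formal proof~\cite{stackline}. Requirement~\ref{req:conc} follows from the shape of the data type: the only mutable shared state of a stack instance is its \texttt{top} pointer together with the nodes reachable from it, which is disjoint across instances; moreover \texttt{push} uses no hazard pointers at all, so there is no pointer shared between the insertion and removal paths, and hence a \texttt{pop} on one stack and a \texttt{push} on another can both succeed simultaneously.

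The substance of the proof is Requirement~\ref{req:caslin}. I would identify the linearization points as the successful \cas on the \texttt{top} pointer at line~\ref{code:pushlin} in \texttt{push} and at line~\ref{code:poplin} in \texttt{pop}; in each case the \cas target is a pointer and the \cas is executed by the invoking process itself, since Treiber's stack employs no helping. The one point needing care is the early return at line~\ref{code:stackfalselin}, which is also a linearization point but is not a \cas; I would observe that it is taken only when the source stack is empty, i.e. only on an \emph{unsuccessful} \texttt{pop}, and the definition constrains only the linearization points of successful operations. For the clause forbidding a successful \cas from leading to an unsuccessful operation, I would note that in both \texttt{push} and \texttt{pop} a successful \scas is immediately followed by \texttt{return true}, with no intervening step that could change the outcome. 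Requirement~\ref{req:nogap} is then immediate: in \texttt{pop} the element is read into \texttt{val} at line~\ref{code:stackval}, which is sequenced strictly before the \scas at line~\ref{code:poplin}, so the removed element is available before the linearization point.

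I expect the only mildly delicate step to be the bookkeeping in Requirement~\ref{req:caslin}: being explicit that the non-\cas linearization point at line~\ref{code:stackfalselin} does not violate the definition, and that the original stack contains no helping scheme, so that the linearization point genuinely belongs to the invoking process. Everything else is a direct reading of the code, so once these two points are spelled out the lemma follows in the same straightforward way as for the queue.
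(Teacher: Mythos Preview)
Your proposal is correct and follows essentially the same approach as the paper: both check the four requirements of Definition~\ref{def:move-cand} in order, cite Vafeiadis for linearizability, note that nothing is shared between instances, identify the \cas at lines~\ref{code:pushlin} and~\ref{code:poplin} as the linearization points while explaining that the non-\cas linearization point at line~\ref{code:stackfalselin} is only taken on an unsuccessful \texttt{pop}, and observe that the value is read at line~\ref{code:stackval} before the \cas. Your write-up is somewhat more explicit than the paper's (e.g., spelling out the absence of helping and the ``successful \cas cannot lead to failure'' clause), but the argument is the same.
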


To make the stack object move-ready we change the \cas operations on lines \ref{code:pushlin} and \ref{code:poplin} to point to \scas instead. We
also need to change the read of $top$ on lines \ref{code:topread1}, \ref{code:topread2}, and \ref{code:topread3}, if we are using a software implementation of DCAS,
so that it goes via the read operation.

Since push can be aborted we also need to add a check after line \ref{code:pushlin} that looks for this condition and frees allocated memory.

The stack is now move-ready and can be used to atomically move elements between instances of the stack and other move-ready objects, such as the previously described
queue. In Section \ref{sec:experiments} we evaluate the performance of
the move-ready stack when combined with another stack as well as when combined with the Michael and Scott queue.

\section{Experiments}
\label{sec:experiments}

\def\figsize{0.85}

\begin{figure*}[t]
\center
\includegraphics[width=\figsize\textwidth]{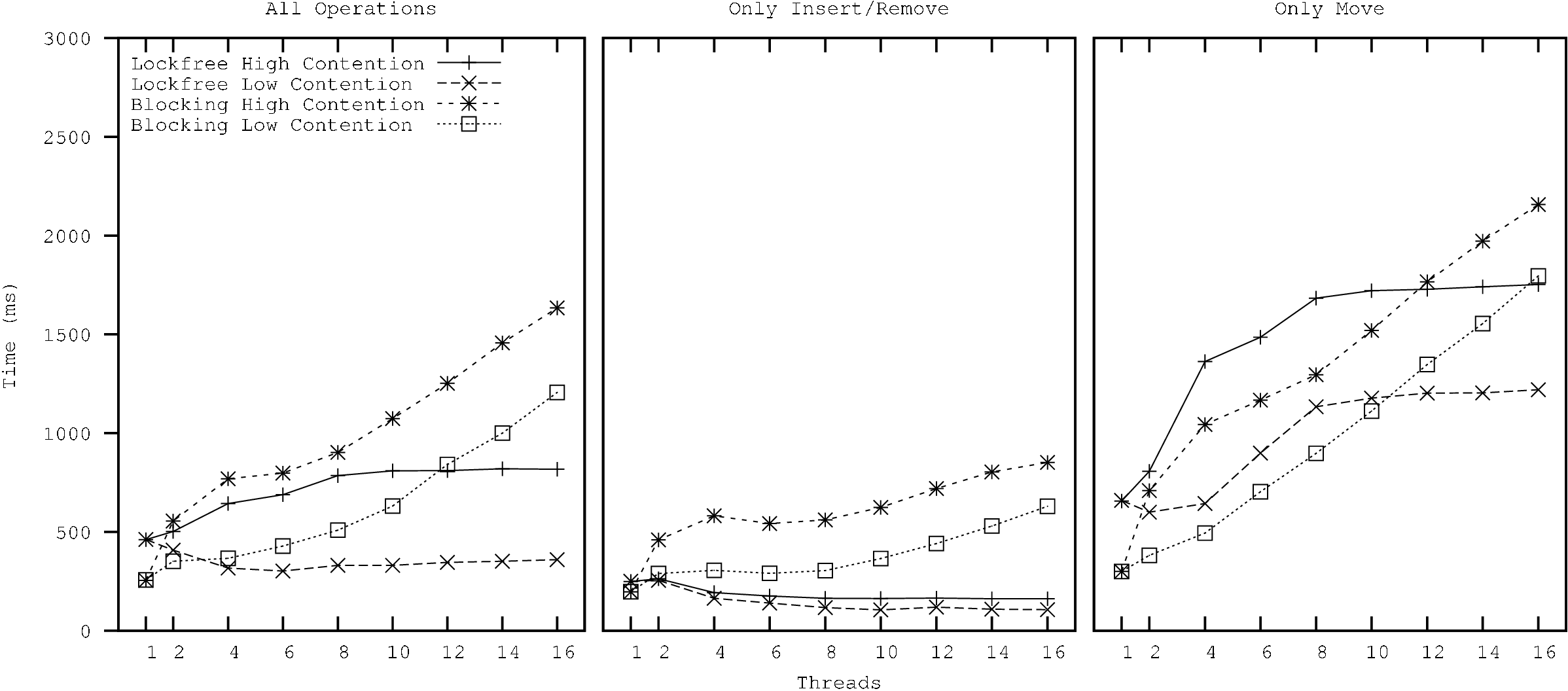}
\caption{Results from the queue/stack evaluation without back-off.}
\label{fig:qs}
\end{figure*}

\begin{figure*}[t]
\center
\includegraphics[width=\figsize\textwidth]{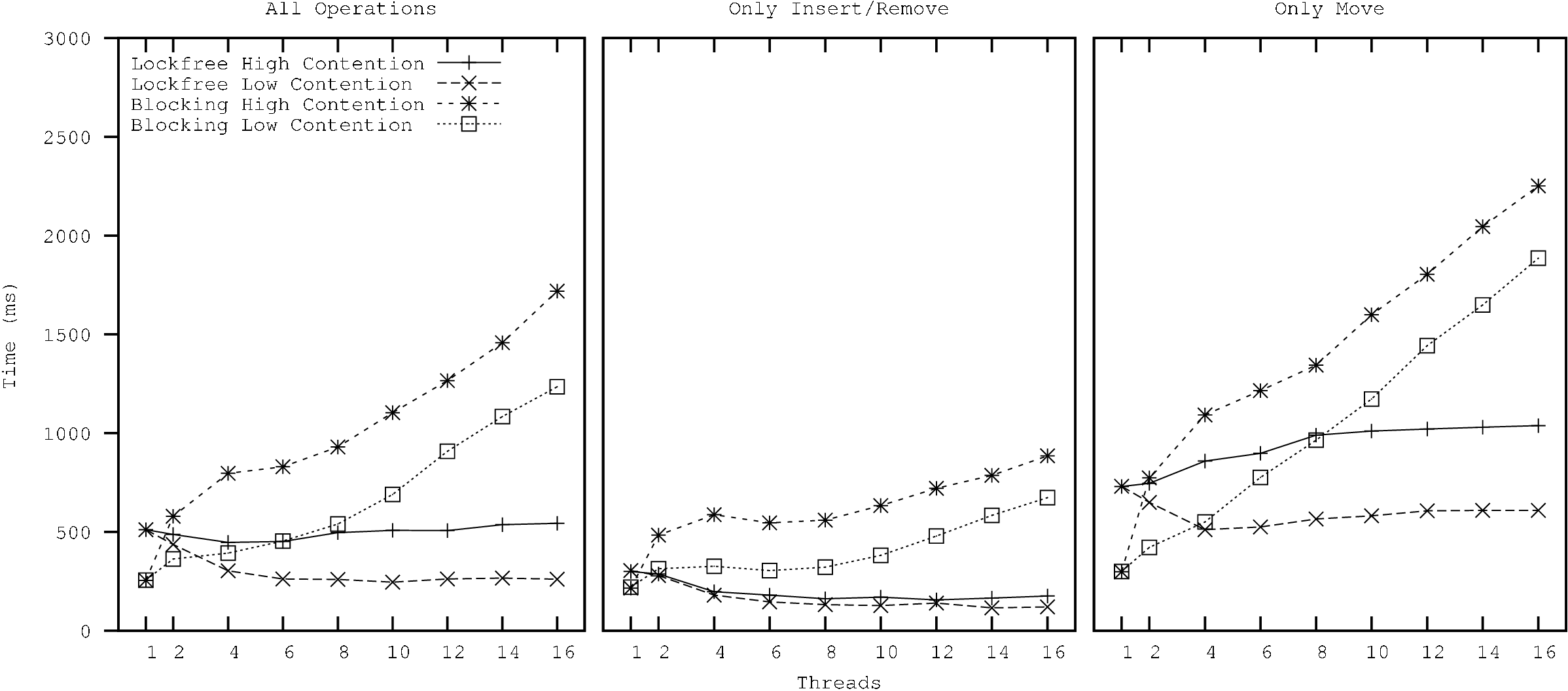}
\caption{Results from the queue evaluation without back-off.}
\label{fig:queue}
\end{figure*}

\begin{figure*}[t]
\center
\includegraphics[width=\figsize\textwidth]{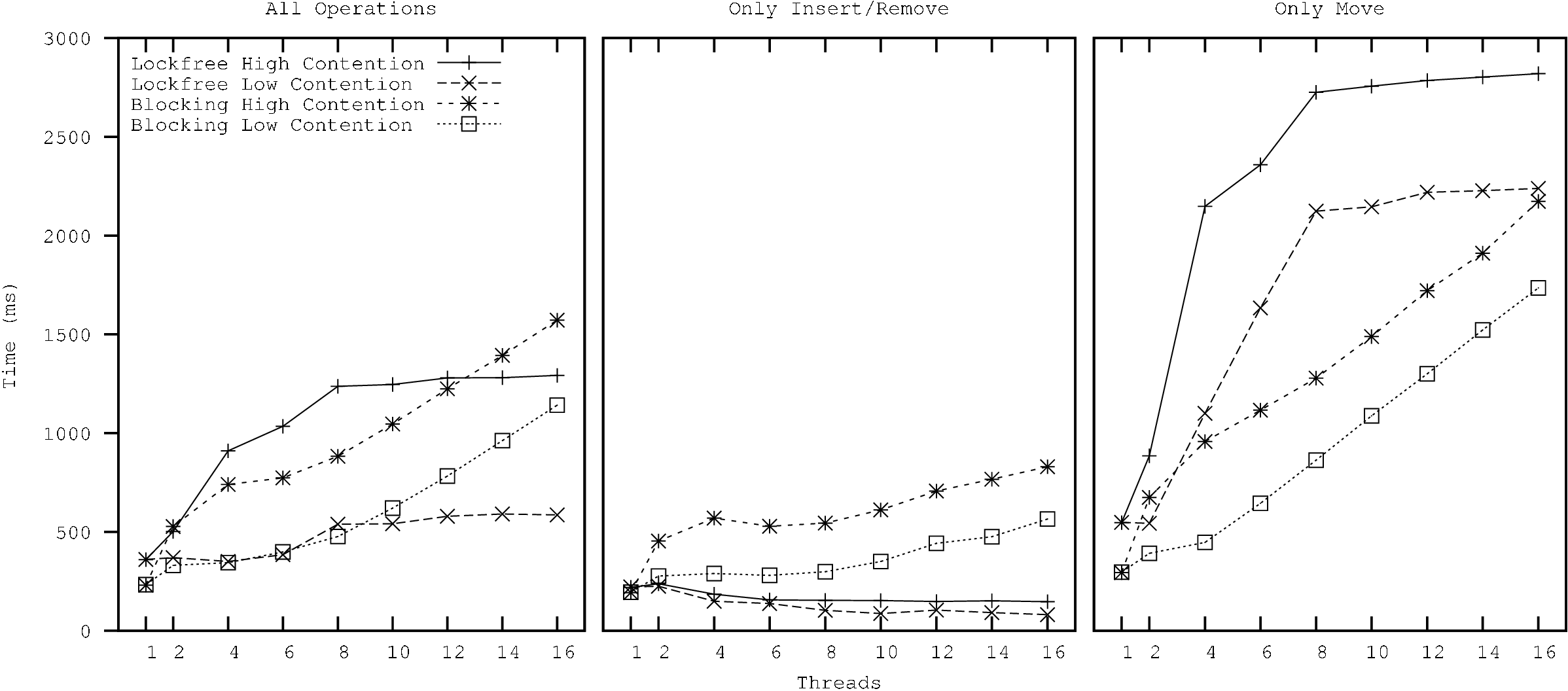}
\caption{Results from the stack evaluation without back-off.}
\label{fig:stack}
\end{figure*}

The evaluation was performed on a machine with an Intel Core i7 950 3\,GHz processor and 6\,GB DDR3-1333 memory.
All experiments were based on either two queues, two stacks, or one queue and one stack. Each thread
randomly performed operations from a set of either just move operations, or just insert/remove operations,
or both move and insert/remove operations. A total of five million operations were distributed evenly to
between one and sixteen threads and each trial was run fifty times.

For reference we compared the lock-free concurrent objects with simple blocking implementations using test-test-and-set
to implement a lock. We did the experiments both with and without a backoff function. The backoff function was used to lower
the contention so that every time a thread failed to acquire the lock or, in case of the lock-free objects, failed to insert or
remove an element due to a conflict, the time it waited before trying again was doubled. The starting wait time and the maximum
wait time were adjusted so as to give the best performance to the blocking implementation.

All implementations used the same lock-free memory manager. Freed nodes are placed on a local list
with a capacity of 200 nodes. When the list is full it is placed on a global lock-free stack. A process that
requires more nodes accesses the global stack to get a new list of free nodes. Hazard pointers were used
to prevent nodes in use from being reclaimed.

Two load distributions were tested, one with high contention and one with low contention, where each thread
did some local work for a variable amount of time after they had performed an operation on the object.
The work time is picked from a normal distribution and the work takes around $0.1\mu s$
per operation on average for the high contention distribution and $0.5\mu s$ per operation on the low contention
distribution.

The total time taken for all threads to finish their allotted operations with
no backoff function, excluding the time it took to perform the local work, is shown in Figures \ref{fig:qs}, \ref{fig:queue} and \ref{fig:stack}.
The local work time was subtracted from the result to emphasize the synchronization overhead.

\section{Discussion}
The results for only the remove/insert operations show that the lock-free versions scale with the number of threads,
while the blocking drops in performance when the contention rises. We get similar results when only move operations
are performed. The lock-free scales quite well, while the blocking performs worse as more contention is added in form of threads.
With backoff the result is similar, except for the blocking implementation that shows a better result for the high contention case.
However, it is typically hard to predict the contention level which often varies during runtime, making it difficult to design
an optimal backoff function that works well during both high and low contention.

The results for the lock-free stack when only move operations are performed are a bit poor. This can be partly attributed to the fact
that there is a lot of false helping in the DCAS, due to the ABA-problem that occurs when the same element is removed and then
inserted again. This causes a lot of extra {\cas}s to occur. The problem can be alleviated by adding a counter to the top pointer in the stack,
removing the possibility of the ABA-problem occurring. The downside with this solution is that it somewhat lowers the performance
of the normal insert and remove operations at the same time.

The graphs with the result of all operations can be seen as an average between the two other graphs and shows
that the lock-free data objects scale quite well. It should also be noted that it is not possible to combine a blocking move
operation with non-blocking insert/remove operations.

\section{Conclusion}
We present a lock-free methodology for composing
together highly concurrent linearizable objects by unifying their
linearization points.
Our methodology introduces atomic move operations that can move elements between
objects of different types, to a large class of already existing
concurrent objects without having to make significant changes to
them.

Our experimental results demonstrate that the methodology presented in the
paper, applied to the classical lock-free implementations,
offers significantly better performance and scalability than a
composition method based on locking. These results also demonstrate that it does not introduce significant performance penalties
to the previously supported operations of the concurrent objects.

Our methodology can also be easily extended to support $n$ operations on $n$ distinct objects, for example to create functions
that remove an item from one object and insert it into $n$ others atomically.

\bibliographystyle{abbrv}
\bibliography{reflist}

\end{document}